\begin{document}
\title{Controlling Borda Elections by Adding or Deleting either Votes or Candidates: Complete and Top-Truncated Votes}
\titlerunning{Complete and Top-Truncated Votes}
% If the paper title is too long for the running head, you can set
% an abbreviated paper title here
%
\author{Aizhong Zhou\inst{1} \textsuperscript{(\Letter)} \and
Fengbo Wang\inst{2} \and
Jiong Guo\inst{3}}
\authorrunning{A. Zhou et al.}
% First names are abbreviated in the running head.
% If there are more than two authors, 'et al.' is used.
%
\institute{
Computer Science and Technology, Ocean University of China
\email{{zhouaizhong}@ouc.edu.cn}\\
\and 
Computer Science and Technology, Ocean University of China\\
\email{{wfb}@stu.ouc.edu.cn}\\
\and
Computer Science and Technology, Shandong University\\
\email{JGuo@sdu.edu.cn}}
\maketitle              % typeset the header of the contribution
%

% the following package is optional:
%\usepackage{latexsym}

\begin{abstract}
An election is defined as a pair of a set of candidates $C=\{c_1,\cdots,c_m\}$ and a multiset of votes $V=\{v_1,\cdots,v_n\}$, where each vote is a linear order of the candidates.
The Borda election rule is characterized by a vector $\langle m-1,m-2,\cdots,0\rangle$, which means that the candidate ranked at the $i$-th position of a vote $v$ receives a score $m-i$ from $v$, and the candidate receiving the most score from all votes wins the election.
Here, we consider the control problems of a Borda election, where the chair of the election attempts to influence the election outcome by adding or deleting either votes or candidates with the intention to make a special candidate win (constructive control) or lose (destructive control) the election.
Control problems have been extensively studied for Borda elections from both classical and parameterized complexity viewpoints.
We complete the parameterized complexity picture for Borda control problems by showing W[2]-hardness with the number of additions/deletions as parameter for constructive control by deleting votes, adding candidates, or deleting candidates.
The hardness result for deleting votes settles an open problem posed by Liu and Zhu~\cite{LZ-TCS-2013}.
Following the suggestion by Menon and Larson~\cite{ML-AAMAS-2017}, we also investigate the impact of introducing top-truncated votes, where each voter ranks only $t$ out of the given $m$ candidates, on the classical and parameterized complexity of Borda control problems.
Constructive Borda control problems remain NP-hard even with $t$ being a small constant.
Moreover, we prove that in the top-truncated case, constructive control by adding/deleting votes problems are FPT with the number $\ell$ of additions/deletions and $t$ as parameters, while for every constant $t\geq 2$, constructive control by adding/deleting candidates problems are W[2]-hard with respect to $\ell$.
\end{abstract}

\section{Introduction}
Elections are a commonly used mechanism to achieve preference aggregation and have applications
in multi-agent settings and political domains. Bartholdi et al.~\cite{BTT-MCM-1992} introduced the usage of computational complexity as a barrier to protect elections against different
manipulative actions. See the book chapters~\cite{FR2016,BR-EC-2016}
for recent surveys of related results.
Here, we focus on the control attacks on elections, where an election chair attempts by adding or deleting either votes or candidates to make a special candidate win the election, the {\em constructive control} model~\cite{BTT-MCM-1992}, or lose the election, the {\em destructive control} model~\cite{HHR-AI-2007}.

Complexity results of control problems have been obtained for many election systems such as Plurality,
Condorcet, Approval Voting, Copeland, and Schulze Voting~\cite{FR2016}.
Recently, control problems for the Borda rule have been studied intensively, one of the most prominent natural voting rules.
Hereby, an election consists of a set of candidates $C=\{c_1,\cdots,c_{m-1},p\}$, a multiset $V=\{v_1,\cdots,v_n\}$ of linear orders of candidates, and an integer $\ell\geq 0$.
The Borda rule is characterized by a vector $\langle m-1,m-2,\cdots,0\rangle$.
The candidate ranked at the $i$-th position of a voter receives a score of $m-i$ from $v$, and the candidate receiving the most score from all votes wins the election\footnote{The co-winner model allows multiple candidates with the same highest score co-win the election, while the unique-winner model only allows an unique candidate with the highest score win the election.}.
The Borda control problems asks for applying at most $\ell$ editions to the election such that $p$ wins (constructive control) or loses (destructive control) the election.
The edition operations consider here include vote additions, vote deletions, candidate additions, and candidate deletions.
Edition operations such as partitioning the candidates or votes have also been consider~\cite{ML-AAMAS-2017,NR-AAAI-2017}.
Note that in the vote (candidate) additions control cases, there are two multisets of votes (candidates), one containing registered votes (candidates), and the other unregistered votes (candidates).
The problems ask to turn at most $\ell$ unregistered votes or candidates into registered votes or candidates, such that $p$ wins or loses with respect to all registered votes or candidates.
Hereby, $p$ is a registered candidate and cannot be deleted.
In summary, we study in total $8$ Borda control problems: CCAV, CCDV, CCAC, CCDC, DCAV, DCDV, DCAC, and DCDC, where ``CC'' and ``DC'' stand for ``Constructive Control'' and ``Destructive Control'', respectively.
``AV'', ``DV'', ``AC'', and ``DC'' mean ``Adding Votes'', ``Deleting Votes'', ``Adding Candidates'', and ``Deleting Candidates'', respectively.
While DCAV, DCDV, DCAC, and DCDC of Borda election are proven to be polynomial-time solvable~\cite{FR2016,R-RIT-2007}, their corresponding constructive control versions turn to be NP-hard~\cite{HS-ECAI-2016,NR-AAAI-2017,R-RIT-2007}.
Constructive Borda control problems have also been studied in connection to some special vote structures such as single-peaked or single-dived~\cite{Y-AAMAS-2017}.

In addition to classical complexity, Borda control problems have also been investigated from the viewpoint of parameterized complexity.
Liu and Zhu~\cite{LZ-TCS-2013} proved that CCAV is W[2]-hard with respect to the number of additions and posed the parameterized complexity of CCDV with respect to the number of deletions as an open problem.
Chen et al.~\cite{CFNT-AAAI-2015} established the NP-hardness of CCAC and CCDC with a constant number of votes, ruling out fixed-parameter tractability (FPT) with respect to the number of votes. 
Yang~\cite{Y-ECAI-2014} proved that all control by adding or deleting votes or candidates problems for a class of scoring rules (including Borda) are FPT with the number of candidates as parameter.

Another line of research motivating our study deals with a special type of partial votes: top-truncated votes~\cite{BS-MPCO-2009}.
Since in many practical settings the voters may not be able to determine a complete order over all candidates, it is important to understand the impact of having partial votes on complexity behaviour of election problems.
Most previous studies on partial votes assume that each voter's preference can be extended to a complete order, for instance, the possible and
necessary winner problems~\cite{DJ12,LV11}.
Recent developments in this direction include possible winner on partitioned partial orders~\cite{B-AAMAS-19} or partial chains~\cite{VP21,CDKKRS-20}, controlling Bucklin, Fallback, and approval voting on partial information~\cite{EFRS-JCS-15}, bribery problems of $k$-approval and $k$-veto elections on partial information~\cite{BER-AAMAS-2016}.
The study of voting problems in connection with top-truncated votes (also called top orders) is initiated by Narodytska and Walse~\cite{NW-ECAI-2014}.
A top-truncated vote is a complete linear order of a subset of candidates.
This means that the unranked candidates are all tied and ranked last.
Top-truncated votes are natural in many settings, where an agent is certain about her most preferred candidates, but is indifferent to the remaining ones.
Fitzsimmons and Hemaspaandra~\cite{FH-ICAD-2015} generalized top-truncated votes to weak orders, which allow ties at each position of the votes.
Menon and Larson~\cite{ML-AAMAS-2017} firstly studied Borda election problems under top-truncated votes.
All these works focus on identifying cases where top-truncated votes increase or decrease the complexity of election problems.
Note that top-truncated votes have also been considered for other election systems~\cite{A-A-2010,BFLR-AAMAS-2012}.
The formal definitions of the Borda rule variations for top-truncated votes and the corresponding control problems are given in Section 2.

{\bf Our results:} We first complete the parameterized complexity picture of CCAV, CCDV, CCAC, and CCDC.
We show that CCDV with the number of deleted votes as parameter is
W[2]-hard, settling an open question posed by Liu and Zhu~\cite{LZ-TCS-2013}. By similar reductions, we
can also show W[2]-hardness for CCAC and CCDC with the number of candidate additions
or deletions as parameter.
See Table~\ref{table:complete} for an overview of classical and parameterized complexity results of constructive Borda control problems with complete votes.
Note that the corresponding destructive Borda control problems are solvable in polynomial-time for both complete and top-truncated votes~\cite{FR2016,R-RIT-2007}.

For top-truncated votes, let~$t$ denote the maximal
size of the subsets of candidates, over which the voters specify truncated votes. The dichotomy results from~\cite{ELH-NCAI-2014} directly imply that CCAV and CCDV are NP-hard for every constant $t\geq 3$.
The cases with $t\leq 2$ can be easily solved in polynomial time.
We complement these results with NP-hardness of CCAC and CCDC for every constant $t\geq 2$.
CCAC and CCDC with $t=1$ are trivially polynomial-time solvable.
Concerning parameterized complexity, we prove that in the case of truncated votes, CCAV and CCDV are FPT with the number of additions/deletions and $t$ as parameters, while CCAC and CCDC with top-truncated votes
remain W[2]-hard with respect to the number of candidate additions or deletions. 
The results for the constructive control problems with top-truncated votes are summarized in Table~\ref{table:truncated}.
%Our results hold for both unique winner and co-winner models. We present only the results for the unique winner case.

\begin{table*}
\centering\caption{ Classical and parameterized complexity of constructive Borda control
with complete votes. Here, $n$ denotes the number of votes, $m$ the number of candidates, and~$\ell$ the number of control operations.
Results marked with $\clubsuit$ are due to~\cite{R-RIT-2007},
with $\diamondsuit$ due to~\cite{HS-ECAI-2016},
with $\sharp$ due to~\cite{EFS-AIR-2011},
with $\heartsuit$ due to ~\cite{CFNT-AAAI-2015},
with $\pounds$ due to~\cite{Y-ECAI-2014},
and with $\spadesuit$ due to~\cite{LZ-TCS-2013}.}
\begin{tabular}{ccccc}
\hline
&{}
{Classical complexity}
& \multicolumn{3}{c}{Parameterized complexity}\\
\hline
&{}
&{Parameter $m$}&{Parameter $n$}&{Parameter $\ell$}\\
\hline
{CCAV} & NP-h$(\clubsuit)$ & FPT$(\pounds)$ & FPT$(\heartsuit)$ & W[2]-h$(\spadesuit)$\\
\hline
{CCDV} & {NP-h}$(\diamondsuit)$ & {FPT}$(\pounds)$ & {FPT}$(\heartsuit)$ &{\bfseries{W[2]-h}}\bfseries{[Thm.~\ref{Complete-CCDV-ell}]}\\
\hline
{CCAC} & {NP-h}$(\sharp)$ & {FPT}$(\pounds)$ &{para-NP-h}$(\heartsuit)$ &{\bfseries{W[2]-h}}\bfseries{[Thm.~\ref{Complete-CCAC-ell}]}\\
\hline
{CCDC} & {NP-h}$(\heartsuit)$ & {FPT}$(\pounds)$ & {para-NP-h}$(\heartsuit)$ & {\bfseries{W[2]-h}}\bfseries{[Thm.~\ref{Complete-CCAC-ell}]}\\
\hline
\end{tabular}
\label{table:complete}
\end{table*}

\begin{table*}
\centering\caption{Classical and parameterized complexity of constructive Borda control problems with $t$-truncated votes.
Here, $\ell$ denotes the number of additions or deletions, and $t$ denotes the maximal number of ranked candidates in a vote.
Results marked with $\star$ are due to~\cite{HS-ECAI-2016}. The definitions of ${\tt Borda}_{\uparrow}$, ${\tt Borda}_{\downarrow}$, and ${\tt Borda}_{av}$ are in Section 2. }
\begin{tabular}{cccccccc}
\hline
& \multicolumn{3}{c}{Classical complexity}
& {}
& \multicolumn{3}{c}{Parameterized complexity}\\
\hline
&{${\tt Borda}_{\uparrow}$}&{${\tt Borda}_{\downarrow}$}&\multicolumn{1}{c}{${\tt Borda}_{av}$} & {}
&{${\tt Borda}_{\uparrow}$}&{${\tt Borda}_{\downarrow}$}&{${\tt Borda}_{av}$}\\
\hline
{CCAV}
& \multicolumn{3}{c}{$t\leq 2$: P,\ $t\geq 3:$ NP-h$(\star)$}
& {}
& \multicolumn{3}{c}{{\bfseries FPT w.r.t. $\ell$ and $t$\ [Thm.~\ref{Top-CCAV-ell,t}}]}\\
\hline
{CCDV}
& \multicolumn{3}{c}{$t\leq 2$: P,\ $t\geq 3:$ NP-h$(\star)$}
& {}
& \multicolumn{3}{c}{{\bfseries FPT w.r.t. $\ell$ and $t$\ [Thm.~\ref{Top-CCDV-ell,t}}]}\\

\hline
{CCAC}
& \multicolumn{3}{c}{$t=1$: {\bfseries P\ Thm.~\ref{thm:1-p}},\ $t\geq 2:$ {\bfseries NP-h\ {[Thm.~\ref{thm:top-CCAC-up-w}~-~\ref{top-CCAC-CCDC-av-w}]}}}
& {}
& \multicolumn{3}{c}{$t\geq 2:$ {\bfseries W[2]-h w.r.t. $\ell$\ {[Thm.~\ref{thm:top-CCAC-up-w}~-~\ref{top-CCAC-CCDC-av-w}]}}}\\

\hline
{CCDC}
& \multicolumn{3}{c}{{$t=1$: {\bfseries P\ Thm.~\ref{thm:1-p}},\ $t\geq 2:$ \bfseries NP-h\ {[Thm.~\ref{thm:top-CCAC-up-w}~-~\ref{top-CCAC-CCDC-av-w}]}}}
& {}
& \multicolumn{3}{c}{$t\geq 2:$ {\bfseries W[2]-h w.r.t. $\ell$\ {[Thm.~\ref{thm:top-CCAC-up-w}~-~\ref{top-CCAC-CCDC-av-w}]}}}\\
\hline
\end{tabular}
\label{table:truncated}
\end{table*}

\section{Preliminaries}
We model an election as a pair~$E=(C,V)$, where~$C=\{c_1, \dots , c_m\}$ is the set of candidates
and~$V=\{v_1, \dots , v_n\}$ is the multiset of votes.
The aim of the election is to choose a candidate from $C$ being the winner.
Here, we consider four types of control operations to change the winner when the current winner is not what we expect.
The aim of the control operations is to make a special candidate $p\in  C$ win (constructive control) or lose (destructive control) the election according to the votes in $V$ and the rule.
Let~${\tt{score}}(c, v)$ denote the score that a candidate~$c$ receives from a vote~$v$, and ${\tt{score}}(c, V)=\sum_{v\in V}{\tt score}(c, v)$ denote the total score that~$c$ receives from the votes in $V$.
A candidate $c$ is the unique winner of the election, if $\forall c'\neq c$, ${\tt score}(c, V)> {\tt score}(c', V)$; $c$ is a co-winner of the election, if $\forall c'\in C$, ${\tt score}(c, V)\geq {\tt score}(c', V)$.
Here, we only present the complexity for the unique winner model.
The complexity results for the co-winner model can be derived by similar algorithms or reductions shown here.

\subsection{Complete Votes and Top-truncated Votes}
A vote~$v_i$ is said to be complete, if it is an antisymmetric, transitive, and total order of the candidates in~$C$.
We consider also top-truncated votes. A top-truncated vote is a complete vote of a subset~$C'$ of~$C$, where all candidates in~$C\setminus C'$ are assumed to be tied and ranked after the candidates in~$C'$.
A multiset~$V$ of votes is said to be \emph{$t$-truncated}, if each vote in~$V$ is a top-truncated vote of a subset with at most~$t$ candidates.
Let $|v|$ be the number of ranked candidates in a vote $v$, called the length of $v$.
For example, when $C=\{c_1,c_2,c_3,c_4,c_5\}$ and~$V=\{v_1,v_2,v_3\}$ with~$v_1: c_2 > c_3 > c_4$, $v_2: c_5 > c_1$, and~$v_3: c_1 > c_3 > c_4$.
The length of $v_1$ is 3, $v_2$ is 2 and $v_3$ is 3.
We use~$c(v, i)$ to denote the candidate at the $i$-th position $(1\leq i \leq |v|)$ of the vote~$v$, and ${\tt pos}(v,c)$ to denote the position of the candidate~$c$ in the vote~$v$.
If~$c$ does not occur in~$v$, then~${\tt pos}(v,c)=0$.

\subsection{Borda Rule and Variants}
Given an election with~$m$ candidates, a positional scoring rule is defined by a scoring
vector~$\alpha=\langle\alpha_1, \dots , \alpha_m\rangle$, where~$\alpha_1\geq \dots \geq \alpha_m$. The
\emph{Borda} rule is characterized by the scoring vector~$\alpha=\langle m-1,
m-2, \dots , 0\rangle$.
This means that the candidate $c$ ranked at the $i$-th position of a complete vote~$v$ receives a score of~$m-i$ from~$v$, ${\tt score}(c, v)=m-i$.
And the candidate receiving the most score from all votes wins the election.
Define ${\tt diff}(c_i, c_j, \{v\})={\tt score}(c_i, v)-{\tt score}(c_j, v)$, and ${\tt diff}(c_i, c_j, V)=\sum_{v\in V}{\tt diff}(c_i, c_j, \{v\})$.
An unique Borda-winner $c$ has to satisfy ${\tt diff}(c,c',V)>0$ for all $c'\in C$ and $c\neq c$.
To deal with $t$-truncated votes with~$t\le m$, we consider the following three variants of Borda\footnote{In~\cite{ML-AAMAS-2017}, ${\tt Borda}_{\downarrow}$ and ${\tt Borda}_{av}$ are defined only for the case that all votes have the same length $t$, i.e., $\forall~v\in V: |v|=t$. However, while dealing with votes of length less than $t$, the proofs in~\cite{ML-AAMAS-2017} adopt the definitions we use here.}.
\begin{itemize}
\item \emph{Round up $({\tt Borda}_{\uparrow})$}: The candidate $c$ ranked at the $i$-th position of a vote $v$ with~$i\leq |v|\leq t \leq m$ receives a score of~$m-i$, i.e., ${\tt score}(c, v)=m-i$, while each unranked candidate $c'$ (${\tt pos}(v, c')=0$) receives a score of~$0$, ${\tt score}(c', v)=0$.
\item \emph{Round down $({\tt Borda}_{\downarrow})$}: The candidate $c$ ranked at the $i$-th position of a vote $v$ with~$i\leq |v|\leq t \leq m$ receives a score of~$|v|-i+1$, i.e., ${\tt score}(c, v)=|v|-i+1$, while each unranked candidate $c'$ (${\tt pos}(v, c')=0$) receives a score of~$0$, ${\tt score}(c', v)=0$.
\item \emph{Average score $({\tt Borda}_{av})$}: The candidate $c$ ranked at the $i$-th position of a vote $v$ with~$i\leq |v|\leq t \leq m$ receives a score of~$m-i$, i.e., ${\tt score}(c, v)=m-i$, while each unranked candidate $c'$ (${\tt pos}(v, c')=0$) receives a score of~$\frac{(m-|v|-1)+\cdots+0}{m-|v|}=\frac{m-|v|-1}{2}$, ${\tt score}(c', v)=\frac{m-|v|-1}{2}$.
\end{itemize}

We show an example on the score of each candidate receiving from each vote by the rule of ${\tt Borda}_\uparrow$, ${\tt Borda}_\downarrow$ and ${\tt Borda}_{av}$ as follows.

{\bf Example 1:} Given a set of candidates $C=\{c_1, c_2, c_3, c_4, c_5\}$ and a vote $v_1: c_2 > c_3 > c_4$.
The score of each candidate is:\\
$(1)$ Under the rule of ${\tt Borda}_\uparrow$, ${\tt score}(c_2, v_1)=4$, ${\tt score}(c_3, v_1)=3$, ${\tt score}(c_4, v_1)=2$ and ${\tt score}(c_1, v_1)={\tt score}(c_5, v_1)=0$.\\
$(2)$ Under the rule of ${\tt Borda}_\downarrow$, ${\tt score}(c_2, v_1)=3$, ${\tt score}(c_3, v_1)=2$, ${\tt score}(c_4, v_1)=1$ and ${\tt score}(c_1, v_1)={\tt score}(c_5, v_1)=0$.\\
$(3)$ Under the rule of ${\tt Borda}_{av}$, ${\tt score}(c_2, v_1)=4$, ${\tt score}(c_3, v_1)=3$, ${\tt score}(c_4, v_1)=2$ and ${\tt score}(c_1, v_1)={\tt score}(c_5, v_1)=0.5$.

\vbox{}
For all three variants, the candidate receiving the most score wins the election. 
We say ``the score of $c$ is improved corresponding to $c'$'' if ${\tt diff}(c, c', V)$ is increased after the control operation; ``the score of $c$ is decreased corresponding to $c'$'' if ${\tt diff}(c, c', V)$ is decreased after the control operations; and ``the score of $c$ is unchanged corresponding to $c'$" when the value of ${\tt diff}(c, c', V)$ remains unchanged after the control operations.

\subsection{Problem Definitions}
In the following, we give the definitions of the Borda control problems considered in this paper, following the standard notations from~\cite{FR2016}.
Note that destructive control by adding/deleting votes/candidates problems are polynomial-time solvable for Borda elections~\cite{FR2016,LNRVW-AAMAS-2015}.
The same strategy works also for $t$-truncated votes.
Thus, we consider here only constructive control problems.
Without loss of generality, we assume that~$p$ is not the unique winner of the Borda election before control operations.
We first define the control problems with complete votes.
\begin{quote}
{\bf Constructive Borda Control By Adding Votes} (CCAV-Borda) \\
{\bf Input}: A set~$C$ of candidates, a special candidate~$p\in C$, a multiset~$V$ of registered complete votes over~$C$, a multiset~$V^*$ of unregistered complete votes over~$C$, and an integer~$\ell \ge 0$. \\
{\bf Question}: Can~$p$ be the unique Borda winner of the election by adding at most~$\ell$ votes from~$V^*$ to~$V$?
\end{quote}

\begin{quote}
{\bf Constructive Borda Control By Deleting Votes} (CCDV-Borda) \\
{\bf Input}: A set~$C$ of candidates, a special candidate~$p\in C$,
a multiset~$V$ of complete votes over~$C$, and an integer~$\ell \ge 0$. \\
{\bf Question}: Can~$p$ be the unique Borda winner of the election by deleting at most~$\ell$ votes from~$V$?
\end{quote}

\begin{quote}
{\bf Constructive Borda Control By Adding Candidates} (CCAC-Borda)\\
{\bf Input}: Two sets~$C$ and~$C^*$ of candidates, a special candidate~$p\in C$,
a multiset~$V$ of complete votes over~$C\cup C^*$, and an integer~$\ell \ge 0$.\\
{\bf Question}: Can~$p$ be the unique Borda winner of the election by adding at most~$\ell$ candidates from~$C^*$
to~$C$?
\end{quote}

\begin{quote}
{\bf Constructive Borda Control By Deleting Candidates} (CCDC-Borda)\\
{\bf Input}: A set~$C$ of candidates, a special candidate~$p\in C$, a multiset~$V$
of complete votes over~$C$, and an integer~$\ell \ge 0$.\\
{\bf Question}: Can~$p$ be the unique Borda winner of the election by deleting at most~$\ell$ candidates from~$C$?
\end{quote}

CCDV-Borda has only one set $V$ of votes, i.e., the registered votes, and asks for deleting at most $\ell$ votes from $V$ to make $p$ the unique winner.
CCDC-Borda has only one set $V$ of votes and one set $C$ of candidates and asks for deleting at most $\ell$ candidates from $C$ to make $p$ the unique winner.
In the case of CCAC-Borda, there are one set $V$ of votes and two sets of candidates, i.e., the set $C$ of registered candidates and the set $C^*$ of unregistered candidates, where $p$ is always a registered candidate.
The vote multiset $V$ consists of complete linear orders of all candidates in $C\cup C^*$.
Before adding candidates from $C^*$ to $C$, the score vector is given by $\langle |C|-1,|C|-2,\cdots,0\rangle$, and we consider the votes resulted by removing all candidates in $C^*$ from the votes in $v$.
After adding a set $C^{**}$ of candidates in $C^*$ to $C$, the votes become the linear orders resulting by removing the candidates in $C^*\setminus{C^{**}}$ from the input votes in $v\in V$, and the scoring vector becomes $\langle |C\cup C^{**}|-1, |C\cup C^{**}|-2 ,\cdots,0 \rangle$.
In CCDC-Borda, we change the scoring vector accordingly, that is, deleting candidates makes the scoring vector shorter.

The corresponding control problems with $t$-truncated votes, denoted as $t$-CCAV-$\delta$, $t$-CCDV-$\delta$, $t$-CCAC-$\delta$, and $t$-CCDC-$\delta$, replace in the above definition ``complete'' by ``$t$-truncated'' and ``Borda'' by the corresponding variant $\delta$ with~$\delta\in\{$Borda$_\uparrow$, Borda$_\downarrow$, Borda$_{\text av}\}$.
\begin{quote}
{\bf Constructive Control with $t$-Truncated Votes By Adding Votes under the rule of $\delta$} ($t$-CCAV-$\delta$) \\
{\bf Input}: A set~$C$ of candidates, a special candidate~$p\in C$,
a multiset~$V$ of registered $t$-truncated votes over~$C$,
a multiset~$V^*$ of unregistered $t$-truncated votes over~$C$,
and an integer~$\ell \ge 0$. \\
{\bf Question}: Can~$p$ be the unique winner of the election by adding at most~$\ell$ votes from~$V^*$ to~$V$?
\end{quote}

\begin{quote}
{\bf Constructive Control with $t$-Truncated Votes By deleting Votes under the rule of $\delta$} ($t$-CCDV-$\delta$) \\
{\bf Input}: A set~$C$ of candidates, a special candidate~$p\in C$,
a multiset~$V$ of $t$-truncated votes over~$C$, and an integer~$\ell \ge 0$. \\
{\bf Question}: Can~$p$ be the unique winner of the election by deleting at most~$\ell$ votes from~$V$?
\end{quote}

\begin{quote}
{\bf Constructive Control with $t$-Truncated Votes By Adding Candidates under the rule of $\delta$} ($t$-CCAC-$\delta$)\\
{\bf Input}: Two sets~$C$ and~$C^*$ of candidates, a special candidate~$p\in C$,
a multiset~$V$ of $t$-truncated votes over~$C\cup C^*$, and an integer~$\ell \ge 0$.\\
{\bf Question}: Can~$p$ be the unique winner of the election by adding at most~$\ell$ candidates from~$C^*$
to~$C$?
\end{quote}

\begin{quote}
{\bf Constructive Control with $t$-Truncated Votes By Deleting candidates under the rule of $\delta$} ($t$-CCDC-$\delta$)\\
{\bf Input}: A set~$C$ of candidates, a special candidate~$p\in C$, a multiset~$V$
of $t$-truncated votes over~$C$, and an integer~$\ell \ge 0$.\\
{\bf Question}: Can~$p$ be the unique winner of the election by deleting at most~$\ell$ candidates from~$C$?
\end{quote}

In $t$-CCAC-$\delta$, each vote ranks at most $t$ candidates from both registered and unregistered candidate sets.
Therefore, in each vote, ranked candidates cannot become unranked by adding candidates and unranked candidates cannot become ranked by deleting candidates, clearly distinguish $t$-CCAC-$\delta$ and $t$-CCDC-$\delta$ from the control by adding/deleting candidates problems for plurality and $r$-approval voting.
Example 2 and 3 show an example of CCDV-Borda and $t$-CCAC-${\tt Borda}_{\uparrow}$, respectively.

\vbox{}
{\bf Example 2:} Given an instance of CCDV-Borda $(E=(C, V), \ell)$ where $C=\{c_1, c_2, c_3, p\}$, $V=\{v_1, v_2, v_3\}$, $\ell=1$, and $v_1: c_1>p>c_2>c_3$, $v_2:p>c_3>c_1>c_2$, $v_3:c_1>c_2>c_3>p$.
The score of each candidate is: ${\tt score}(c_1, V)=3+1+3=7$, ${\tt score}(c_2, V)=1+0+2=3$, ${\tt score}(c_3, V)=0+2+1=3$, and ${\tt score}(p, V)=2+3+0=5$.
$c_1$ is the winner of the election.
To make $p$ be the unique winner of the election, we can delete the vote $v_3$.
After the deletion operation, the score of each candidate turns to be:
${\tt score}(c_1, V\setminus{v_3})=4$, ${\tt score}(c_2, V\setminus{v_3})=1$, ${\tt score}(c_3, V\setminus{v_3})=2$, and ${\tt score}(p, V\setminus{v_3})=5$.
$p$ becomes the unique winner of the election.
Therefore, $\{v_3\}$ is a solution of CCDV-Borda $(E=(C, V), \ell)$.

\vbox{}
{\bf Example 3:} Given an instance of $3$-CCAC-${\tt Borda}_\uparrow$ $(E=(C, V), \ell)$ where $C=\{c_1, c_2, p\}$, $C^*=\{v_1, v_2, v_3\}$, $\ell=2$ and $v_1: c_1>p>c_5$, $v_2: p>c_3>c_2$, $v_3: c_5>c_4>c_1$.
Before adding candidates from $C^*$ to $C$, $v_1, v_2$ and $v_3$ are $c_1>p, p>c_2$ and $c_1$, respectively.
Now, the score of each candidate is: ${\tt score}(c_1, V)=2+0+2=4$, ${\tt score}(c_2, V)=0+1+0=1$, and ${\tt score}(p, V)=1+2+0=3$.
$c_1$ is the winner of the election.
To make $p$ be the unique winner of the election, we can add the candidates $c_4$ and $c_5$.
After addition operation, $v_1, v_2$ and $v_3$ turn to be $c_1>p>c_5$, $p>c_2$, and $c_5>c_4>c_1$, respetively.
And the score of each candidate becomes: ${\tt score}(c_1, V)=4+0+2=6$, ${\tt score}(c_2, V)=0+3+0=3$, ${\tt score}(c_4, V)=0+0+3=3$, ${\tt score}(c_5, V)=2+0+4=6$, and ${\tt score}(p, V)=3+4+0=7$.
$p$ is the unique winner of the election.
Therefore, $\{c_4, c_5\}$ is a solution of $3$-CCAC-${\tt Borda}_\uparrow$ $(E=(C, V), \ell)$.

\subsection{Parameterized Complexity}
The main hierarchy of parameterized complexity classes is
FPT $\subseteq$ W[1]$\ \subseteq$ W[2] \dots $\subseteq$ W[$h$] \dots $\subseteq$ XP.
A problem is FPT (stands for ``fixed-parameter tractable''), if it
admits an~$O(f(k) \cdot |\mathcal{I}|^{O(1)})$-time
algorithm\footnote{The time is often written as $O^*(f(k))$.}, where~$\mathcal{I}$ is the input,~$k$ is the parameter, and~$f$ can be any computable function.
The classes~W[$h$] with~$h\ge 1$ refer to fixed-parameter intractability. For more details on parameterized complexity we refer to~\cite{CFKLMPPS-Springer-2015}.

The NP-hardness and the fixed-parameter intractable results are achieved by reductions from the W[2]-hard {\bf {\sc Dominating Set}} \cite{DF99}.
Here, the input consists of an undirected graph~$\mathcal{G}=(\mathcal{V},\mathcal{E})$ and an integer~$k\ge 0$.
We seek for a vertex subset~$DS$, such that~$|DS|\le k$, and~$\forall v\in \mathcal{V}$,~$v\in DS$ or~$ \exists u\in DS, \{v,u\}\in \mathcal{E}$.

\section{Parameterized Complexity for Complete Votes}
We first consider complete votes and complete the parameterized complexity picture of Borda control problems by showing W[2]-hardness of CCDV-Borda, CCAC-Borda, and CCDC-Borda with the number of edition operations as parameter.
Hereby, the W[2]-hardness of CCDV-Borda settles an open question in ~\cite{LZ-TCS-2013}.
\begin{theorem}
\label{Complete-CCDV-ell}
CCDV-Borda is W[2]-hard with respect to the number of deletions~$\ell$.
\end{theorem}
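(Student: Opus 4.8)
The plan is to reduce in polynomial time from the W[2]-hard {\sc Dominating Set} problem with parameter~$k$, producing an instance of CCDV-Borda with $\ell=k$; since the new parameter equals the old one, this is an fpt-reduction and yields W[2]-hardness. Given $\mathcal{G}=(\mathcal{V},\mathcal{E})$ with $\mathcal{V}=\{v_1,\dots,v_n\}$ and~$k$ (and, trivially, we may assume $k<n$), I would use candidates $C=\{p\}\cup\{c_1,\dots,c_n\}\cup D$, where $c_i$ encodes the vertex~$v_i$ and $D$ is a large pool of dummy candidates; I also single out a block $B_1$ of dummies whose size is a sufficiently large polynomial in~$n$ (large enough to dominate every term appearing below). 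Writing $N[v_i]$ for the closed neighbourhood of~$v_i$, the heart of the construction is one \emph{selector vote}~$w_i$ per vertex, of the shape
\[
[\,c_j : v_j\in N[v_i]\,]\ >\ B_1\ >\ p\ >\ [\,c_j : v_j\notin N[v_i]\,]\ >\ [\text{further dummies}],
\]
where inside each bracket the candidates are listed in index order. Picking a vertex into a dominating set is meant to correspond to deleting the selector vote of that vertex.

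The property I would verify is that in~$w_i$ the candidate~$p$ is ranked below~$c_j$ exactly when $v_j\in N[v_i]$, and in fact ${\tt diff}(p,c_j,\{w_i\})\le -(|B_1|+1)$ in that case, whereas $1\le {\tt diff}(p,c_j,\{w_i\})\le n$ when $v_j\notin N[v_i]$. Thus deleting~$w_i$ raises ${\tt diff}(p,c_j,\cdot)$ by at least $|B_1|+1$ for every~$c_j$ with $v_j\in N[v_i]$, and lowers it by at most~$n$ for every other~$c_j$. Crucially, these two bounds do not depend on the degrees of the vertices, which is precisely why $N[v_i]$ is pushed to the very top, separated from~$p$ by the large fixed block~$B_1$, with the non-neighbour candidates parked immediately below~$p$. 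Since $\sum_i{\tt diff}(p,c_j,\{w_i\})$ is then negative, I would add \emph{filler votes}, each ranking~$p$ first (so that each of them contributes a \emph{positive} amount to every ${\tt diff}(p,c_j,\cdot)$), and calibrate the internal dummy sub-blocks of these votes so that ${\tt diff}(p,c_j,V)=-1$ for all~$j$; in particular $p$ is not yet the unique winner. A handful of \emph{balancing votes} ranking~$p$ high and all dummies last ensure that~$p$ beats every dummy; like the filler votes, these only harm~$p$ when deleted.

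For correctness, in the forward direction: given a dominating set~$DS$ with $|DS|\le k$, delete $S=\{w_i:v_i\in DS\}$. For each~$j$, at least one deleted selector has $v_j$ in its neighbourhood and at most $k-1$ do not, so ${\tt diff}(p,c_j,V\setminus S)\ge -1+(|B_1|+1)-(k-1)n=|B_1|-(k-1)n>0$, and together with the balancing votes~$p$ becomes the unique winner. Conversely, suppose a set~$S$ of at most~$k$ votes can be deleted so that~$p$ wins. Deleting a filler or a balancing vote only decreases every ${\tt diff}(p,c_j,\cdot)$ (as $p$ is ranked first there), so replacing~$S$ by the sub-multiset of selector votes it contains preserves the property; hence we may assume $S=\{w_i:v_i\in S'\}$ with $S'\subseteq\mathcal{V}$ and $|S'|\le k$. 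If some vertex~$v_{j^*}$ were dominated by no vertex of~$S'$, then every deleted selector would satisfy $v_{j^*}\notin N[v_i]$, so ${\tt diff}(p,c_{j^*},V\setminus S)\le {\tt diff}(p,c_{j^*},V)=-1<0$, contradicting that~$p$ wins. Hence~$S'$ is a dominating set of size at most~$k$.

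The step I expect to be the main obstacle is the design of the selector votes so that the ``gain'' and ``loss'' magnitudes are independent of the vertex degrees (handled, as above, by putting $N[v_i]$ on top and inserting the buffer~$B_1$), together with the exact arithmetic calibration of the initial score differences by the filler and balancing votes while keeping the number of candidates, the number of votes, and the vote lengths polynomial in the input. Once these pieces are in place, the two directions above are immediate. The same scheme should adapt to CCAC-Borda and CCDC-Borda (Theorem~\ref{Complete-CCAC-ell}), with adding, respectively deleting, a single candidate playing the role that deleting a single selector vote plays here.
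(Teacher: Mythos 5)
Your proposal follows essentially the same route as the paper: a parameter-preserving reduction from {\sc Dominating Set} with one ``selector'' vote per vertex of exactly the paper's shape (closed neighbourhood on top, a huge dummy buffer, then $p$, then the non-neighbours parked just below $p$), so that deleting the selector of $v_i$ gains $p$ at least $|B_1|+1$ against every neighbour-candidate and costs at most $n$ against every other vertex candidate, and the two directions go through as you state. The differences are in how the initial scores are calibrated. The paper first computes each vertex candidate's surplus $s_i$ over $p$ in the selector votes, then adds a degree-dependent set $Z_i$ of $s_i$ extra candidates and a pair of votes $v_i^2,v_i^3$ per vertex so that every vertex candidate is \emph{exactly tied} with $p$; because of the tie it also needs the extra candidate $c_w$ (tied with $p$ and ahead of $p$ only in selector votes) to force at least one selector deletion. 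Your choice of target $\mathtt{diff}(p,c_j,V)=-1$ together with making every non-selector vote rank $p$ first removes the need for a $c_w$-type gadget and makes the backward direction more direct, which is a nice simplification. What you leave open is precisely where the paper does its bookkeeping: (i) realizing the exact, degree-dependent deficits (of order $n\cdot|B_1|$) by filler votes with $p$ on top --- doable, e.g.\ by paired/rotated fillers over a large dummy pool, but it must be constructed explicitly, which is the analogue of the paper's $Z_i$/$v_i^2,v_i^3$ machinery; and (ii) quantifying the balancing votes: since the $B_1$ dummies sit above $p$ in all $n$ selector votes and up to $\ell=k$ deletions can each shift $p$ versus a bottom dummy by up to $|C|$, ``a handful'' is an underestimate --- you need $p$'s lead over every dummy to exceed roughly $n\cdot|B_1|+k\cdot|C|$, which still only requires polynomially many votes (this corresponds to the paper's choice of $q=n'k$, $r=n'^2k^2$ and its case analysis showing no auxiliary candidate can win). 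With those two routine pieces written out, your proof is complete and equivalent in strength to the paper's.
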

\begin{proof}
We prove this theorem by a reduction from {\sc Dominating Set}.
{\sc Dominating Set} is W[2]-hard with respect to the size of the dominating set.
Given a {\sc Dominating set} instance~$(\mathcal{G}=(\mathcal{V},\mathcal{E}), k)$ with~$n'=|\mathcal{V}|$, we use~$N[v_i]$ to denote the closed neighborhood of~$v_i\in \mathcal{V}$ and set~$d_i=|N[v_i]|$.
Without loss of generality, we assume $n'>k\geq 1$ and $d_i\geq 2$ for all $v_i\in V$.
In the following, we construct an equivalent CCDV-Borda instance~$(E=(C,V),\ell)$. 
To this end, we first construct an auxillary election~$(C',A)$ with $(C',A)$ with $C'=C_1 \cup X \cup Y \cup \{c_h,c_w,p\}$.
For each vertex $v_i\in \mathcal{V}$, we create the following candidates:\begin{itemize}
    \item one candidate $c^1_i$ in $C_1$;
    \item $q=n'\times k$ candidates $X_i=\{x_i^1,\cdots, x_i^q\}$ in $X$;
    \item $r=n'^2\times k^2$ candidates $Y_i=\{y_i^1,\cdots,y_i^r\}$ in $Y$.
\end{itemize}
Let~$N[c_i^1]$ denote the set of candidates in~$C_1$, whose corresponding vertices are in~$N[v_i]$.
Hereby, we define three arbitrary but fixed complete orders of $C_1$, $X$, and $Y$, respectively, denoted as $\overrightarrow{C_1}$, $\overrightarrow{X}$, and $\overrightarrow{Y}$.
For two sets $S'$ and $S$ with $S'\subset S$ and an order $\overrightarrow{S}$ of $S$, $\overrightarrow{S'}$ denotes the order resulted by deleting the elements in $S\setminus{S'}$ from $\overrightarrow{S}$.
%Let $\overrightarrow{S}$ be arbitrary but fixed order of $S$.
We use $\overleftarrow{S}$ to denote the contrary order of $\overrightarrow{S}$.
We construct a vote $v_i^1$ in $A$ for each vertex $v_i\in  \mathcal{V}$:\\
\indent\indent
 $v_i^1:\; \overrightarrow{N[c^1_i]}>\overrightarrow{Y_i}>c_w>c_h>p>\overrightarrow{X_i}>
\overrightarrow{C_1\setminus N[c^1_i]}>\overrightarrow{X\setminus X_i}>\overrightarrow{Y\setminus Y_i}$.\\
Thus, $|A|=n'$ and $|C|=n'+n'^2\times k+n'^3k^2$.
Since the  candidate $c_i^1$ is ranked in front of $p$ only in the votes corresponding to the vertices in $N[v_{i}]$, there are exactly~$d_i=|N[c^1_i]|$ votes~$v\in A$ satisfying~${\tt diff}(c^1_i,p,\{v\})\geq r+2$.
Recall ${\tt diff}(c,c',\{v\})={\tt score}(c,v)-{\tt score}(c',v)$, where ${\tt score}(c,v)$ denotes the Borda-score that $c$ receives from $v$.
The other $n'-d_i$ votes $v$, where $c_i^1$ is ranked behind $p$, satisfy ${\tt diff}(c_i^1,p,\{v\})\geq -(q+n')$.
Thus, we have:
\begin{equation*}
\begin{split}
     {\tt diff}(c_i^1,p,A)&\geq d_i(r+2)-(n'-d_i)(q+n')\\
     &=d_i(n'^2k^2+2)-(n'-d_i)(n'k+n')\\
    &\geq d_in'^2k^2-n'(n'k+n')+2d_i\\
    &\geq 2n'^2k^2-2n'^2k+4\geq 4.
\end{split}
\end{equation*}
We set~$s_i={\tt diff}(c^1_i,p,A)-4$.
It is clear that $s_i\geq 0$.

Now, we construct the CCDV-Borda instance $(E=(C,V),\ell)$ from the auxillary election.
For each vertex $v_i\in \mathcal{V}$, we construct $s_i$ candidates in $Z_i$, i.e., ~$Z_i=\{z_i^1,\ldots, z_i^{s_i}\}$, and set~$Z=\bigcup_{1\le i\le n'}Z_i$.
Let~$C=C'\cup Z\cup \{c_{a},c_{b}\}$.
We define an arbitrary but fixed order of $Z$, denoted as $\overrightarrow{Z}$.
Finally, the vote set~$V$ consists of three subsets:~$V=V_1\cup V_2\cup V_3$, where~$V_1$ contains~$n'$ votes and is constructed in a similar way as the set~$A$.
For each~$v_i\in \mathcal{V}$, we create one vote $v_i^1$ in~$V_1$:\\
\indent
$v_i^1:\; \overrightarrow{N[c^1_i]}>\overrightarrow{Y_i}>c_w>c_h>p>\overrightarrow{X_i}>\overrightarrow{C_1\setminus N[c^1_i]}>\overrightarrow{Z}>\overrightarrow{X\setminus X_i}>\overrightarrow{Y\setminus Y_i}>c_b>c_a$,\\
Then, for each $v_i\in \mathcal{V}$, we create one vote $v_i^2$ in $V_2$ and one vote $v_i^3$ in $V_3$:\\
\indent
$v_i^2: \overrightarrow{C_1\setminus \{c^1_i\}}>c_b>c_a>p>c_w>\overrightarrow{Z_i}>c_i^1>\overrightarrow{Z\setminus Z_i}>\overrightarrow{X}
>\overrightarrow{Y}>c_h$,\\
\indent
$v_i^3:\;p>c_w>c^1_i>\overleftarrow{C_1\setminus \{c^1_i\}}>c_b>c_a>\overleftarrow{Z}>\overleftarrow{X}>\overleftarrow{Y}>c_h$.\\
We set $\ell=k$.
Observe that for each $c_i^1\in C_1$, we have ${\tt diff}(c_i^1,p,\{v_i^2,v_i^3\})=0$, if $i\neq j$; otherwise, ${\tt diff}(c_i^1,p,\{v_i^2,v_i^3\})=-s_i-4$.
Therefore, we have that for each $c_i^1\in C_1$, ${\tt diff}(c_i^1,p,V_1)=s_i+4$, ${\tt diff}(c_i^1,p,V_2\cup V_3)=-s_i-4$, and ${\tt diff}(c_i^1,p,V)=0$.
This implies that $p$ is not the unique winner.

Before proving the correctness of the construction, we analyse the possible final score of each candidate after deleting at most $\ell$ votes.
Let $V^{**}$ be an arbitrary subset of $V$ with $|V^{**}|\leq \ell$.\\
\indent
1) The candidates $u\in X\cup Z$ are ranked behind $p$ in each vote. Then, ${\tt diff}(u,p,\{v\})\\
<0$ for each $v\in V$ and cannot be the winner by deleting at most $\ell$ votes.\\
\indent
2) There are $n'$ votes $v\in V_2$ where ${\tt diff}(c_{b},p,\{v\})=2$, and ${\tt diff}(c_a,p,\{v\})=1$, and $2n'$ votes $v\in V_1\cup V_3$ where ${\tt diff}(p,c_{a}, \{v\})>n'$, and ${\tt diff}(p,c_{b},\{v\})>n'$.
Thus, it holds ${\tt diff}(p,c_{b}, V\setminus{V^{**}})>(2n'-\ell)\times n'-n'\times 2\geq 0$ and ${\tt diff}(p,c_{a},V\setminus{V^{**}})>(2n'-\ell)\times n'-n'\times 1\geq 0$.
Candidates $c_{a}$, $c_{b}$ cannot be the winner by deleting at most $\ell$ votes.\\
\indent
3) There are $n'$ votes $v \in V_1$ where ${\tt diff}(c_h,p,\{v\})=1$ and $2n'$ votes $v\in V_2\cup V_3$ where ${\tt diff}(p,c_h,\{v\})>1$.
Thus, ${\tt diff}(p,c_h,V\setminus{V^{**}})>(2n'-\ell)\times 1-n'\times 1\geq 0$, and $c_h$ cannot be the winner by deleting at most $\ell$ votes.\\
\indent
4) For each candidate $y\in Y$, there is exactly one vote $v$ in $V_1$, with $r+2\geq {\tt diff}(y,p,\{v\})>0$, while all other votes $v'$ in $V_1$ satisfy ${\tt diff}(p,y,\{v'\})>0$.
Moreover, ${\tt diff}(p,y,\{v_i^2,v_i^3\})>r+2q$ for each $i$. 
Thus, ${\tt diff}(p,y,V\setminus{V^{**}})>(n'-\ell)(r+2q)-(r+2)\geq 0$, and $y$ cannot be the unique winner by deleting at most $\ell$ votes.\\
\indent
Therefore, only the candidates in $C_1\cup \{c_w, p\}$ can be the unique winner of the election by deleting at most $\ell$ votes.
Note that for each $1\leq i\leq n'$, we have ${\tt diff}(p,c_w,\{v_i^1,\\
v_i^2,v_i^3\})=0$ and ${\tt diff}(p,c_w,V)=0$.
Thus, all these possible winners have the same Borda-score.
Next, we show the equivalence between the instances~$(\mathcal{G}=(\mathcal{V},\mathcal{E}), k)$ and $(E=(C,V),\ell)$.

\indent
``$\Longrightarrow$'': Given a size-$\leq k$ dominating set~$DS$, we set~$V^{**}=\{v_i^1\in V_1 |\ v_i\in DS\}$.
As argued above, we only consider the scores of the candidates in~$C_1\cup \{c_w,p\}$. Since~$V^{**}\subseteq V_1$, it holds ${\tt diff}(c_w,p,\{v\})>0$ for each $v\in V^{**}$.
We have that before deleting votes,~${\tt diff}(c_w,p,V)=0$.
Thus, we have~${\tt diff}(p,c_w,V\setminus V^{**})>0$.
Let~$c_i^1\in C_1$ be the candidate corresponding to an arbitrary vertex~$v_i\in \mathcal{V}$. Since~$DS$ is a dominating set, there is a vertex~$v_j\in DS$ with~$v_i\in N[v_j]$. 
Therefore, there is a vote~$v_j^1\in V^{**}$ with ${\tt diff}(c_i^1,p,\{v_j^1\})>|Y_j|=r>(\ell-1)q$.
Since ${\tt diff}(p,c^1_i,\{v\})\leq |X|+|C_1|<q$ for each $v\in V_1$, we conclude~${\tt diff}(p,c_i^1,V\setminus V^{**})>0$ for all~$c_i^1\in C_1$. Thus,~$V^{**}$ is a solution of the CCDV-Borda instance.

\indent
``$\Longleftarrow$'': Suppose that there is a set~$V^{**}$ of at most~$\ell$ votes satisfying ${\tt diff}(p,c,V\setminus V^{**})>0$
for all~$c\in C\setminus{\{p\}}$. We consider in particular the candidates in~$C_1\cup \{c_w\}$. For~$c_w$, we have ${\tt diff}(c_w,p,V)=0$ and~${\tt diff}(c_w,p,\{v\})>0$ only for~$v\in V_1$.
This means that there is at least one vote~$v_i^1\in V_1$ in $V^{**}$.
If a vote $v_i^1\in V_1$ is deleted, then for all $u\in (C_1\setminus{N[c_i^1]})$, we have ${\tt diff}(u,p,V\setminus{\{v_i^1\}})-{\tt diff}(u,p,V)={\tt diff}(u,p,\{v_i^1\})>|X_i|=n'\times k$. 
There must be some other votes deleted to decrease the score of $u$.
Note that for all votes $v\in V_3$, ${\tt diff}(p,u,\{v\})>0$ and for all votes $v\in V_2$, ${\tt diff}(u,p,\{v\})\leq n'+1$.
Due to $(n'+1)(k-1)<n'\times k$, there must be a vote in $V_1$ deleted to get ${\tt diff}(u,p,V\setminus V^{**})<0$.
It means there must be a vote $v_j^1\in V_1$ in $V^{**}$ and $u\in N[c_j^1]$.
More preciously, $\forall c_i^1\in C_1, \exists v_j^1\in V^{**}, c_i^1\in N[c_j^1]$.
Let $\mathcal{V''}$ be the set of the vertices corresponding to the votes in $V^{**}$.
Then, $\forall v_i\in \mathcal{V}, \exists v_j\in V'', v_i\in N[v_j]$.
So, $\mathcal{V''}$ is a dominating set of $\mathcal{G}$ with at most $k$ vertices.
\end{proof}

For CCAC-Borda and CCDC-Borda, we can give similar reductions to prove their hardness results.
The key idea of the reductions is to guarantee that, before adding or deleting candidates, the scores of some candidates (denoted as $C^+$) are equal to $p$'s score.
To make $p$ become the unique winner, we have to increase $p$'s score or decrease the scores of the candidates $c$ in $C^+$.
To do this, for CCAC-Borda, we add a new candidate $c'$ between $p$ and $c$ in at least one vote.
It means that a vote $v_1: \cdots p>\cdots >c\cdots$ is transformed into $v_1': \cdots p>\cdots >c'>\cdots >c\cdots$.
Similarly, for CCDC-Borda, we delete a candidate between $c$ and $p$ in at least one vote. 
This means that a vote $v_1: \cdots c>\cdots>c'>\cdots >p\cdots$ is transformed into $v_1': \cdots c>\cdots >p\cdots$.

\begin{theorem}
\label{Complete-CCAC-ell}
CCAC-Borda problem is W[2]-hard with respect to the number of candidate additions $\ell$.
\end{theorem}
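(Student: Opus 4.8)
The plan is to reduce from the W[2]-hard {\sc Dominating Set} problem, following the same template as the proof of Theorem~\ref{Complete-CCDV-ell} but exploiting the effect of candidate additions sketched right before the statement. The basic observation I would use is that adding one candidate lengthens the scoring vector by one, so a candidate keeps its score from a vote $v$ exactly when the newly added candidate is inserted above it, and gains one point exactly when it is inserted below it. Consequently, if in a vote $v$ a new candidate $c'$ ends up strictly between $p$ and a candidate $c$ (i.e.\ $p>c'>c$ in $v$), then ${\tt diff}(p,c,\{v\})$ increases by exactly one; more generally, adding $c'$ increases ${\tt diff}(p,c'',\{v\})$ by one for every $c''$ ranked below $c'$ in $v$ and leaves ${\tt diff}(p,c'',\{v\})$ unchanged for every $c''$ ranked above $c'$.

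Given a {\sc Dominating Set} instance $(\mathcal{G}=(\mathcal{V},\mathcal{E}),k)$ with $\mathcal{V}=\{v_1,\dots,v_{n'}\}$ and $d_j=|N[v_j]|$, I would build an instance $(E=(C,V),\ell)$ of CCAC-Borda with $\ell=k$, registered candidates $C=\{p\}\cup\{c_i^1: v_i\in\mathcal{V}\}\cup D$ (where $D$ is a block of padding candidates used for bookkeeping and kept far behind $p$), and unregistered candidates $C^*=\{c_j^*: v_j\in\mathcal{V}\}$. The votes are constructed so that, after virtually removing $C^*$, (i) ${\tt diff}(p,c_i^1,V)=0$ for every $i$, so $p$ is a co-winner but not the unique winner before control, and (ii) every candidate in $D$ trails $p$ by more than $\ell\cdot|V|$, so it can never catch up after at most $\ell$ additions, while the added candidates $c_j^*$ receive negligible total score and therefore also stay behind $p$. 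In addition, for each vertex $v_j$ there is one \emph{selector vote}
\[
v_j':\quad \cdots > p > \big(c_i^1 : v_i\notin N[v_j]\big) > c_j^* > \big(c_i^1 : v_i\in N[v_j]\big) > \cdots > \big(\text{all } c_{j'}^*,\ j'\neq j\big),
\]
while in every non-selector vote all candidates of $C^*$ are placed at the very bottom, and appropriate balancing votes (playing the role of $V_2$ and $V_3$ in the proof of Theorem~\ref{Complete-CCDV-ell}) are included to enforce (i).

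By the position-shift observation above, adding a set $S\subseteq C^*$ then changes ${\tt diff}(p,c_i^1,V)$ by exactly $|\{v_j: c_j^*\in S \text{ and } v_i\in N[v_j]\}|$: in a non-selector vote, and in a selector vote $v_{j'}'$ with $j'\neq j$, the candidate $c_j^*$ sits in the bottom block, so adding it leaves all pairwise differences among the remaining candidates unchanged; only in its own selector vote $v_j'$ does $c_j^*$ sit between $p$ and the block $\{c_i^1: v_i\in N[v_j]\}$, raising $p$'s margin over exactly that block by one (and never decreasing any margin), and since at most one member of $S$ occupies the middle slot of each vote, these contributions simply add up. It follows that if $DS\subseteq\mathcal{V}$ is a dominating set with $|DS|\le k$, then adding $\{c_j^*: v_j\in DS\}$ makes ${\tt diff}(p,c_i^1,V)\ge 1>0$ for all $i$ while keeping every other candidate behind $p$, so $p$ becomes the unique winner; conversely, if adding a set $S$ with $|S|\le\ell=k$ makes $p$ the unique winner, then ${\tt diff}(p,c_i^1,V)>0$ for all $i$ forces every vertex $v_i$ to lie in $N[v_j]$ for some $v_j$ with $c_j^*\in S$, i.e.\ $\{v_j: c_j^*\in S\}$ is a dominating set of size at most $k$. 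Since the reduction is polynomial-time and parameter-preserving, W[2]-hardness of CCAC-Borda with respect to $\ell$ follows.

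The step I expect to be the main obstacle is the bookkeeping behind (i): I have to fix the exact positions of $p$, the candidates $c_i^1$, and the candidates of $C^*$ inside the selector votes and pad the construction with enough balancing votes so that ${\tt diff}(p,c_i^1,V)=0$ holds simultaneously for all $i$, even though the $n'$ selector votes contribute different amounts depending on $d_j=|N[v_j]|$, and so that the padding candidates in $D$ and the candidates of $C^*$ stay safely below $p$ throughout. A secondary point that needs care is checking that adding several unregistered candidates at once does not create unintended interactions — this holds because in each vote at most one of them leaves the bottom block, so their effects on the relevant differences are independent and additive.
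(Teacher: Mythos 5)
Your proposal is correct and follows essentially the same route as the paper: a parameter-preserving reduction from {\sc Dominating Set} in which $p$ starts exactly tied with one rival candidate per vertex, and adding the unregistered candidate for $v_j$ inserts it between $p$ and precisely the rivals corresponding to $N[v_j]$ in one designated vote (all other placements of unregistered candidates being at the bottoms of votes and hence inert), so that a size-$\le k$ dominating set corresponds exactly to a successful set of $\le \ell$ additions. The only substantive difference is that the paper sidesteps what you flag as your main obstacle: it introduces, per vertex, a registered rival $c_i^2$ together with a mirrored pair of votes $v_i^1:\ \overrightarrow{C_2\setminus\{c_i^2\}}>p>\overrightarrow{N[c_i^1]}>c_i^2>\cdots$ and $v_i^2:\ c_i^2>p>\cdots$ with all unregistered candidates ranked behind $p$ and $c_i^2$, which yields ${\tt diff}(c_i^2,p,V)=0$ for every $i$ automatically, so no padding candidates or separate balancing votes are needed at all.
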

\begin{proof}
We prove this theorem by a reduction from {\bf Dominating Set} problem.
Given an instance of Dominating Set problem $(\mathcal{G}=(\mathcal{V}, \mathcal{E}), k')$ where $|\mathcal{V}|=n'$.
Similar to Theorem~\ref{Complete-CCDV-ell}, let $N[v_i']$ be the set of vertices incident to $v_i'$ including $v_i'$.
Firstly, we construct an instance $(E=(C,C^*,V),\ell)$ of CCAC-Borda problem by $(\mathcal{G}=(\mathcal{V}, \mathcal{E}), k')$.
For each vertex $v_i'\in \mathcal{V}$, we construct a candidate $c_i^1$ in $C_1$, $C_1=\bigcup_{1\leq i\leq n'}\{c_i^1\}$, and a candidate $c_i^2$ in $C_2$, $C_2=\bigcup_{1\leq i \leq n'}\{c_i^2\}$.
Let $C=C_2\cup {p}$, $C^*=C_1$.
For each vertex $v_i'$, we also construct two votes:

\begin{equation*}
\begin{split}
&v_i^1: \overrightarrow{C_2\setminus{c_i^2}}>p>\overrightarrow{N[c_i^1]}>c_i^2>\overrightarrow{C_1\setminus{c_i^1}},\\
&v_i^2: c_i^2>p>\overleftarrow{C_2\setminus{c_i^2}}>\overleftarrow{N[c_i^1]}>\overleftarrow{C_1\setminus{N[c_i^1]}}.
    \end{split}
\end{equation*}
Let $V_1=\bigcup_{1\leq i\leq n'}\{v_i^1\}$, $V_2=\bigcup_{1\leq i\leq n'}\{v_i^2\}$, $V=V_1\cup V_2$, and $\ell=k'$.

Before adding the candidates in $C_1$, each vote ranks the candidates in $C_2\cup \{p\}$.
The two votes $v_i^1, v_i^2$ constructed according to the vertex $v_i'$ are $\overrightarrow{C_2\setminus{c_i^2}}>p>c_i^2$ and $c_i^2>p>\overrightarrow{C_2\setminus{c_i^2}}$ respectively.
The score of $p$ is equal to each candidate $c_i^2$, that is ${\tt diff}(c_i^2,p,V)=0, \forall c_i^2\in C_2$.
$p$ is not the unique winner.

Furthermore, we can get that the position of $c_i^1$ is always behind $p$ in each vote, ${\tt pos}(c_i^1,v)-{\tt pos}(p,v)>0, \forall c_i^1\in C_1, \forall v\in V$.
So, it always hold ${\tt diff}(c_i^1,p,V)<0$, the added candidates $c_i^1$ cannot be the winner.
And, the candidates of $C_1$ are always behind $p$ and $c_i^2$ in $V_2$.
The adding candidates operations do not change the value of ${\tt diff}(c_i^2,p,V_2)$.
So, we just need to focus on the values of ${\tt diff}(c_i^2,p,V_1)$ after adding the candidates in $C_1$.
In the following, we show the equivalence of the two instances.

\noindent
``$\Longrightarrow$'': Suppose that there is a dominating set $DS$ in $\mathcal{G}$ with $|DS|\leq k'$.
Let $C^{**}$ be the set of candidates in $C_1$ corresponding to the vertices in $DS$.
It is clear $|C^{**}|=|DS|\leq k'=\ell$.
Since $DS$ is a dominating set of $\mathcal{G}$, it always holds $\forall v_i'\in \mathcal{V}, \exists v_j'\in DS, v_j'\in N[v_i']$.
The candidates in $C_1$ and $C_2$ correspond to the vertices in $\mathcal{V}$.
More preciously, candidates $c_i^1$ and $c_i^2$ correspond to the vertex $v_i'$.
So, it must be $\forall c_i^2\in C_2, \exists c_j^1\in C_1, c_j^1\in N[c_i^1]$.
It means, for each vote in $V_1$, we add at least a candidate between $c_i^2$ and $p$, and the score of each $c_i^2$ will be decreased by at least $1$ corresponding to p.
In consequence, it satisfies ${\tt diff}(p,c_i^2,V)>0, \forall c_i^2\in C_2$.
Since the added candidates cannot be the winner, $p$ is the unique winner.

\noindent
``$\Longleftarrow$'': Suppose that there is a set of candidates $C^{**}$ with $C^{**}\subset C_1$ and $|C^{**}|\leq \ell$, after adding the candidates in $C^{**}$, the candidate $p$ is the unique.
Let $\mathcal{V''}$ be the set of vertices corresponding to the candidates in $C^{**}$.
Since $p$ is the unique winner after adding candidates, it must be ${\tt diff}(p,c_i^2,V)>0, \forall c_i^2\in C_2$.
It means the score of $c_i^2$ is decreased by at least 1 corresponding to $p$ after adding candidates.
According to the structure of $\overrightarrow{C_2\setminus{c_i^2}}>p>\overrightarrow{N[c_i^1]}>c_i^2$ in $V_1$, only adding the candidates in $N[c_i^1]$ can decrease the score of $c_i^2$ corresponding to $p$.
So, for each $c_i^2$, there must be a candidate of $N[c_i^1]$ in $C^{**}$.
That is $\forall c_i^2 \in C_2, \exists c_j^1\in C^{**}, c_j^1\in N[c_i^1]$.
Since the candidates in $C_1$ are corresponding to the vertices in $\mathcal{V}$.
It always holds $\forall v_i'\in \mathcal{V}, \exists v_j'\in \mathcal{V''}, v_j'\in N[v_i']$.
Therefore, $\mathcal{V''}$ is a size-$\leq k'$ dominating set of $\mathcal{G}$.
\end{proof}

\begin{theorem}
\label{Complete-CCDC-ell}
CCDC-Borda problem is W[2]-hard with respect to the number of candidate deletions $\ell$.
\end{theorem}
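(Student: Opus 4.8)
The plan is a reduction from {\sc Dominating Set} that mirrors the proof of Theorem~\ref{Complete-CCAC-ell}, replacing its ``insert a new candidate between a tying candidate and $p$'' move by the dual ``delete a candidate that currently sits between a tying candidate and $p$''. Recall that deleting one candidate from a vote costs exactly one Borda point to every candidate ranked above it in that vote and leaves the score of every candidate ranked below it unchanged; hence, if in some vote a candidate $c$ is ranked above a candidate $x$ which is in turn ranked above $p$, then deleting $x$ lowers ${\tt diff}(c,p,V)$ by exactly one, whereas deleting a candidate ranked below both $c$ and $p$ in every vote does not change ${\tt diff}(c,p,V)$.

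Given $(\mathcal{G}=(\mathcal{V},\mathcal{E}),k')$ with $\mathcal{V}=\{v_1',\dots,v_{n'}'\}$, I would use candidates: the distinguished candidate $p$; a set $C_1=\{c_1^1,\dots,c_{n'}^1\}$ of \emph{vertex candidates}, which will be the only candidates it is ever worthwhile to delete; a set $C_2=\{c_1^2,\dots,c_{n'}^2\}$ of \emph{tie candidates} with ${\tt diff}(c_i^2,p,V)=0$ before any deletion; and a block of auxiliary candidates, playing the role of the sets $X,Y,Z$ of the proof of Theorem~\ref{Complete-CCDV-ell}, whose sole purpose is to adjust magnitudes and to keep every candidate outside $C_2\cup\{p\}$ strictly and robustly below $p$. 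For each vertex $v_i'$ I would include a vote of the shape
\[
v_i^1:\ c_i^2 \,>\, \overrightarrow{N[c_i^1]} \,>\, p \,>\, \overrightarrow{C_1\setminus N[c_i^1]} \,>\, (\text{auxiliary candidates}),
\]
where $N[c_i^1]$ denotes the vertex candidates whose vertices lie in $N[v_i']$. By the observation above, deleting $c_j^1$ lowers ${\tt diff}(c_i^2,p,\{v_i^1\})$ by one exactly when $v_j'\in N[v_i']$; the construction is arranged so that deleting any non-vertex candidate leaves every difference ${\tt diff}(c_\ell^2,p,V)$ unchanged. A second, heavily duplicated family of votes would place each $c_i^2$ just below $p$ and above all of $C_2\setminus\{c_i^2\}$ while parking $C_1$ at the very bottom; these votes restore ${\tt diff}(c_i^2,p,V)=0$ and, crucially, make deleting any single $c_i^2$ raise ${\tt diff}(c_j^2,p,V)$ for all $j\neq i$ by an amount so large (of order $n'^2 k'^2$, as the set $Y$ is sized in Theorem~\ref{Complete-CCDV-ell}) that no set of at most $\ell-1$ further candidate deletions can undo it. Finally set $\ell=k'$.

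For ``$\Longrightarrow$'', given a dominating set $DS$ of size $\le k'$, delete $\{c_i^1:v_i'\in DS\}$; for every $i$ some deleted $c_j^1$ lies between $c_i^2$ and $p$ in $v_i^1$, so every ${\tt diff}(c_i^2,p,V)$ becomes negative, and since no candidate outside $C_2$ ever reaches $p$, the candidate $p$ becomes the unique winner. For ``$\Longleftarrow$'', from any solution one argues, exactly as in the proofs of Theorems~\ref{Complete-CCDV-ell} and~\ref{Complete-CCAC-ell}, that it may be assumed to consist solely of vertex candidates: deleting an auxiliary candidate is never profitable (it only weakens $p$ or is neutral), and deleting a tie candidate is fatal by the duplicated votes; and that for each $i$ it must contain some $c_j^1$ with $v_j'\in N[v_i']$, since in $v_i^1$ only the candidates of $N[c_i^1]$ sit between $c_i^2$ and $p$ and no other vote can compensate. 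The vertices corresponding to the deleted candidates then form a dominating set of size at most $k'$.

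The step I expect to be the main obstacle is designing the auxiliary candidates and the duplicated votes so that all of the following hold simultaneously: (i) ${\tt diff}(c,p,V)=0$ holds exactly for $c\in C_2$; (ii) every other candidate --- in particular each vertex candidate $c_i^1$, which is ranked above $p$ in $v_i^1$ --- stays below $p$ under any $\le\ell$ deletions; (iii) deleting a tie candidate or an auxiliary candidate is never part of an optimal solution; and (iv) the ``domination'' deletions of vertex candidates have precisely the intended effect. This is the same kind of magnitude bookkeeping carried out for the sets $X$, $Y$, $Z$ in the proof of Theorem~\ref{Complete-CCDV-ell}, and the same amplification constants make it go through.
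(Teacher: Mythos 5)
Your high-level plan is the right one and matches the paper's strategy (reduce from {\sc Dominating Set}, use a vote $v_i^1$ of the form $c_i^2>\overrightarrow{N[c_i^1]}>p>\cdots$ so that deleting a vertex candidate in $N[c_i^1]$ pushes the tie candidate $c_i^2$ strictly below $p$), but as written there is a concrete flaw plus a deferred piece that is not routine. The flaw: in your $v_i^1$ you rank $C_1\setminus N[c_i^1]$ directly after $p$ and park the remaining tie candidates among the ``auxiliary candidates'' at the bottom. Deleting a candidate $x$ lowers by one the score of every candidate ranked \emph{above} $x$ and leaves candidates below $x$ untouched; hence deleting a vertex candidate $c_j^1$ with $v_j'\notin N[v_i']$ (which a dominating set will typically force you to do for many $i$) sits \emph{between} $p$ and every tie candidate $c_{i'}^2$ ranked below it in $v_i^1$, and therefore \emph{increases} ${\tt diff}(c_{i'}^2,p,\{v_i^1\})$ by one. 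Summed over the $n'$ votes of type $v_i^1$, these unwanted $+1$'s can far exceed the single guaranteed $-1$ that $c_{i'}^2$ receives from $v_{i'}^1$, so your ``$\Longrightarrow$'' claim that every ${\tt diff}(c_i^2,p,V)$ becomes negative does not follow. The paper's construction avoids exactly this by fixing the order $c_i^2>\overrightarrow{N[c_i^1]}>p>\overrightarrow{C_2\setminus\{c_i^2\}}>\overrightarrow{X}>\overrightarrow{C_1\setminus N[c_i^1]}$, i.e., all other tie candidates come immediately after $p$, so no deletable vertex candidate ever lies between $p$ and a tie candidate.

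The deferred piece is your mechanism for excluding deletions of tie candidates. You want deleting $c_i^2$ to be ``fatal'' via a heavily duplicated family of votes $\cdots>p>c_i^2>C_2\setminus\{c_i^2\}>\cdots$, amplified to order $n'^2k'^2$. But $M$ copies of such votes give $p$ a surplus of order $M\cdot n'$ over every $c_j^2$, which you must cancel elsewhere to keep ${\tt diff}(c_j^2,p,V)=0$; the compensating votes then either contain candidates between $c_j^2$ and $p$ in many copies (a single deletion of such a candidate collapses the gap and can break the ``$\Longleftarrow$'' direction), or they require a large supply of fresh top-ranked padding candidates whose own scores must in turn be controlled. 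None of this is specified, and it is precisely where the equivalence lives, not mere bookkeeping. The paper sidesteps the whole issue with a much lighter device: for each $i$ it adds $d_i=|N[v_i']|$ dummies $X_i$ and the single companion vote $v_i^2:\ \overleftarrow{C_2\setminus\{c_i^2\}}>p>\overrightarrow{X_i}>c_i^2>\cdots$, so that ${\tt diff}(c_i^2,p,\{v_i^1,v_i^2\})=0$ exactly, and then handles deletions of $c_i^2$ (and of dummies) by an exchange argument: any solution deleting $c_i^2$ or a dummy can be rewritten into one deleting only $C_1$-candidates, because deleting $c_i^1$ also forces ${\tt diff}(p,c_i^2,V)\ge 1$. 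You should either adopt that ordering and exchange argument or supply the missing amplification construction in full before the proof can be considered complete.
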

\begin{proof}
The proof is similar to the proof in Theorem~\ref{Complete-CCAC-ell} by a reduction from {\bf Dominating set} problem.
Given an instance of dominating set problem $(\mathcal{G}=(\mathcal{V},\mathcal{E}), k')$ where $|\mathcal{V}|=n'$.
Let $N[v_i']$ be the set of vertices incident to $v_i'$ including $v_i'$.
Firstly, we construct an instance of CCDC-Borda problem according to $(\mathcal{G}=(\mathcal{V},\mathcal{E}), k')$.
For each vertex $v_i'\in \mathcal{V}$, we construct a candidate $c_i^1$ in $C_1$, $C_1=\bigcup_{1\leq i\leq n'}\{c_i^1\}$; a candidate $c_i^2$ in $C_2$, $C_2=\bigcup_{1\leq i\leq n'}\{c_i^2\}$, and $d_i=|N[v_i']|$ dummy candidates $x_i^j$ in $X_i$, $X=\bigcup_{1\leq i\leq n'}X_i$.
The construction of candidates in $X_i$ aims at making the score of each $C_2$-candidate equals to $p$.
Let $C=C_1\cup C_2 \cup X\cup \{p\}$.
For each vertex $v_i'\in \mathcal{V'}$, we construct two votes:
\begin{equation*}
    \begin{split}
        &v_i^1: c_i^2>\overrightarrow{N[c_i^1]}>p>\overrightarrow{C_2\setminus{\{c_i^2\}}}>\overrightarrow{X}>\overrightarrow{C_1\setminus{N[c_i^1]}},\\
    &v_i^2: \overleftarrow{C_2\setminus{\{c_i^2\}}}>p>\overrightarrow{X}>c_i^2>\overrightarrow{X\setminus{X_i}}>\overrightarrow{C_1}.
    \end{split}
\end{equation*}
Let $V_1=\bigcup_{1\leq i\leq n'}\{v_i^1\}$, $V_2=\bigcup_{1\leq i\leq n'}\{v_i^2\}$, $V=V_1\cup V_2$, $\ell=k'$.

Before deleting candidates, the candidates in $X$ are always behind of $p$ in each vote, ${\tt pos}(x_i^j, v)>{\tt pos}(p,v), \forall x_i^j\in X, v\in V$.
It always holds ${\tt diff}(p,x_i^j,V)>0$.
The candidate $x_i^j$ cannot be the unique winner.
In addition, the position of $p$ is always forward of $c_j^1$ in $v_i^2\in V_2$, ${\tt pos}(p,v_i^2)<{\tt pos}(c_j^1,v_i^2)$, and ${\tt diff}(p,c_j^1,\{v_i^2\})>|X|$.
In $v_i^1$, even the position of $c_j^1$ can be forward of $p$, it still holds ${\tt diff}(c_j^1,p,\{v_i^1\})<\sum_{1\leq i\leq n'}|N[c_i^1]|=|X|$.
So, for each $C_1$ candidate $c_j^1$, ${\tt diff}(p,c_j^1,\{v_i^1\})+{\tt diff}(c_j^1,p,\\
\{v_i^2\})>0$ is always hold.
Since $|N[c_i^1]|=|X_i|=d_i$, there are $d_i$ candidate between $p$ and $c_i^2$ in $v_i^1$.
It means ${\tt diff}(c_i^2,p,\{v_i^1\})={\tt diff}(p,c_i^2,\{v_i^2\})=d_i+1$.
The candidates in $C_2\setminus{\{c_i^2\}}$ are always symmetrical corresponding to $p$ in $v_i^1, v_i^2$, since they are $p>\overrightarrow{C_2\setminus{c_i^2}}$ and $\overrightarrow{C_2\setminus{\{c_i^2\}}}>p$ in $v_i^1$ and $v_i^2$ respectively.
Therefore, the score of each $C_2$ candidate is equal to $p$'s, ${\tt diff}(p,c_i^2,V)=0, \forall c_i^2\in C_2$.
$p$ is not the unique winner.

Since the positions of candidates in $X$ are always behind of $p$ in $V$, deleting the candidates in $X$ is meaningless to make $p$ being the unique winner.
We suppose $C^{**}$ is a solution of $(E=(C,V),\ell)$.
If there is a candidate $x_i^j$ in $C^{**}$, then $C^{**}\setminus{\{x_i^j\}}$ must also be a solution of $(E=(C,V),a)$.
If there is a candidate $c_i^2$ in $C^{**}$, according to the structure of $V_1\cup V_2$, the position of $c_i^2$ is not between $p$ and $c_j^2$.
So, the aim of deleting $c_i^2$ is only to make ${\tt diff}(p,c_i^2,V)>0$.
If we delete candidate $c_i^1$ instead of $c_i^2$, it can also make sure ${\tt diff}(p,c_i^2,V)=1>0$.
It means $(C^{**}\cup \{c_i^1\})\setminus{\{c_i^2\}}$ is also a solution of $(E=(C,V),\ell)$.
In summary, each solution of $(E=(C,V),\ell)$ can be transformed in to another solution whose elements are all candidates in $C_1$.
So, in the following, we just consider the condition that deleting the candidates in $C_1$.

\noindent
``$\Longrightarrow$'': Suppose that there is a dominating set $DS$ in $\mathcal{G}$ with $|DS|\leq k'$.
Let $C^{**}$ be the set candidates corresponding to $DS$ with $C^{**}\subset C_1$ and $|C^{**}|=|DS|\leq k'=\ell$.
Since $DS$ is a dominating set of $\mathcal{G}$, it must hold $\forall v_i'\in \mathcal{V}, \exists v_j'\in DS, v_j'\in N[v_i']$.
And the candidates in $C_1$ or $C_2$ are corresponding to the vertices in $\mathcal{V'}$.
It always holds $\forall c_i^2\in C_2, \exists c_j^1\in C^{**}, c_j^1\in N[c_i^2]$.
It means, for each candidate $c_i^2$, there must be a candidate in $N[c_i^1]$ deleted.
So, in each $v_i^1$, there is at least one candidate between $c_i^2$ and $p$ deleted.
The score of each $c_i^2$ is decreased by at least 1 corresponding to $p$, ${\tt diff}(p,c_i^2,V)\geq 1, \forall c_i^2\in C_2$.
According to the above analyse, the candidates in $C_1\cup X$ cannot be the winner.
So, $p$ is the unique winner.

\noindent
``$\Longleftarrow$'': Suppose that there is a set of candidates $C^{**}$ with $|C^{**}|\leq \ell$ that deleting the candidates in $C^{**}$ can make $p$ the unique winner of $(E=(C,V),\ell)$.
Let $\mathcal{V''}$ be the set of vertices corresponding to the candidates in $C^{**}$, $|\mathcal{V''}|=|C^{**}|\leq \ell =k'$.
Because $p$ is the unique winner after deleting candidates, it must be ${\tt diff}(p,c_i^2,V)>0, \forall c_i^2\in C_2$.
Since before deleting candidates, the score of each $C_2$ candidate $c_i^2$ is equal to $p$'s, ${\tt diff}(p,c_i^2,V)=0$.
So, deleting the candidates in $C^{**}$ can decrease the score of each $C_i^2$ by at least 1 corresponding to $p$.
According the structure of $c_i^2>\overrightarrow{N[c_i^1]}>p$ in $v_i^1$, we can only delete the candidates in $N[c_i^1]$.
So, for each candidate $c_i^2$, there must be a candidate $c_j^1\in N[c_i^2]$ in $C^{**}$.
It means $\forall c_i^2\in C_2, \exists c_j^1\in C^{**}, c_j^1\in N[c_i^2]$.
The candidates in $C_1$ and $C_2$ are corresponding to the vertices in $\mathcal{V}$.
It can be transformed into $\forall v_i'\in \mathcal{V}, \exists v_j'\in \mathcal{V''}, v_j'\in N[v_i']$.
$\mathcal{V''}$ is a size-$\leq k'$ dominating set of $\mathcal{G}$.
\end{proof}

\section{Complexity for Top-Truncated Votes}
In this section, we consider the complexity of Borda control problems with $t$-truncated votes.
Hemaspaandra and Schnoor~\cite{HS-ECAI-2016} proved that $t$-CCAV-$\delta$ and $t$-CCDV-$\delta$ are NP-hard for every $t\geq 3$ and polynomial-time solvable for $t\leq 2$.
We present FPT algorithms for $t$-CCDV-$\delta$ and $t$-CCAV-$\delta$ with $t$ and the number of operations $\ell$ as parameter (Theorems~\ref{Top-CCDV-ell,t},~\ref{Top-CCAV-ell,t}).
For candidate control problems, we show the NP-hardness and W-hardness results for $t$-CCAC-$\delta$ and $t$-CCDC-$\delta$ with $t\geq 2$ (Theorem~\ref{top-CCDC-up-w}-\ref{top-CCAC-CCDC-av-w}).

\begin{theorem}
\label{Top-CCDV-ell,t}
$t$-CCDV-$\delta$ with $\delta \in \{{\tt Borda}_{\uparrow}, {\tt Borda}_{\downarrow}, {\tt Borda}_{av}\\\}$ is FPT with respect to the number of deletions $\ell$ and the maximal number $t$ of ranked candidates in each vote.
\end{theorem}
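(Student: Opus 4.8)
The plan is to decide the instance by an fpt\nobreakdash-reduction to \textsc{Integer Linear Programming Feasibility} whose number of variables is bounded in $\ell$ and $t$, which is then handled by Lenstra's algorithm for integer programming in fixed dimension.

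\emph{A bounded critical set of candidates.} First I would compute ${\tt diff}(p,c,V)$ for every $c\in C\setminus\{p\}$ and let $\mathcal R$ be the set of the $\min\{|C|-1,\ell t+1\}$ candidates with the smallest such values, ties broken arbitrarily; in the case $\delta={\tt Borda}_{av}$ with $|C|\le 2t-2$ the number of candidates is already bounded in the parameters and there I just set $\mathcal R:=C\setminus\{p\}$. The key claim is that for every $D\subseteq V$ with $|D|\le\ell$, the statement ``$p$ is the unique $\delta$\nobreakdash-winner of $(C,V\setminus D)$'' is equivalent to ``${\tt diff}(p,c,V\setminus D)>0$ for all $c\in\mathcal R$''. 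One direction is immediate. For the other, note that deleting a vote, which ranks at most $t$ candidates, changes ${\tt diff}(p,c,\cdot)$ only for those candidates, so $D$ affects ${\tt diff}(p,c,\cdot)$ for at most $\ell t$ candidates $c\ne p$; hence, unless $\mathcal R=C\setminus\{p\}$, some $c^{*}\in\mathcal R$ is untouched by $D$. A computation uniform over the three variants---using that a ranked candidate always receives at least the score an unranked candidate receives (clear for ${\tt Borda}_\uparrow$ and ${\tt Borda}_\downarrow$, and true for ${\tt Borda}_{av}$ once $|C|\ge 2t-1$)---shows that, writing $s_v$ for the score an unranked candidate gets from $v$ and $\Delta:=\sum\bigl({\tt score}(p,v)-s_v\bigr)$ over the deleted votes $v\in D$ that rank $p$, we have $\Delta\ge 0$, ${\tt diff}(p,c,D)\le\Delta$ for every $c\ne p$, and ${\tt diff}(p,c^{*},D)=\Delta$ because $c^{*}$ is untouched. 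If all $\mathcal R$\nobreakdash-constraints hold then ${\tt diff}(p,c^{*},V\setminus D)>0$ gives $\Delta<{\tt diff}(p,c^{*},V)$; since every $c\notin\mathcal R\cup\{p\}$ has ${\tt diff}(p,c,V)\ge{\tt diff}(p,c^{*},V)$ by the choice of $\mathcal R$, we get ${\tt diff}(p,c,V\setminus D)={\tt diff}(p,c,V)-{\tt diff}(p,c,D)\ge{\tt diff}(p,c,V)-\Delta>0$, which proves the claim.

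\emph{Bucketing votes and the ILP.} Call a vote \emph{relevant} if it ranks at least one member of $\mathcal R\cup\{p\}$. Deleting an irrelevant vote leaves ${\tt diff}(p,c,V)$ unchanged for every $c\in\mathcal R$, so by the claim a solution may be assumed to delete only relevant votes. Two relevant votes have the same \emph{type} if their lists of ranked candidates induce the same ordered pattern of members of $\mathcal R\cup\{p\}$ and the same number of remaining candidates in each block before, between, and after these; since $|v|\le t$ and $|\mathcal R\cup\{p\}|\le\ell t+2$, the number $q$ of types is bounded by a function of $\ell$ and $t$, and a vote's type determines ${\tt score}(p,v)$ and ${\tt score}(c,v)$ for all $c\in\mathcal R$, hence the fixed amount $\gamma_{T,c}$ by which deleting one type\nobreakdash-$T$ vote changes ${\tt diff}(p,c,V)$. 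Letting $n_T$ be the number of votes of type $T$ and using variables $x_T\in\{0,\dots,n_T\}$, the instance is a yes\nobreakdash-instance iff the system $\sum_T x_T\le\ell$ together with ${\tt diff}(p,c,V)-\sum_T x_T\gamma_{T,c}\ge 1$ for each $c\in\mathcal R$ (scaled by $2$ to clear half\nobreakdash-integers when $\delta={\tt Borda}_{av}$) is feasible; a feasible $x$ yields a deletion set directly. This ILP has $q=f(\ell,t)$ variables, so Lenstra's algorithm decides it, and produces a witness, in time $g(\ell,t)\cdot|\mathcal{I}|^{O(1)}$.

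I expect the main obstacle to be the critical\nobreakdash-set claim: one must argue that the at most $\ell t$ candidates a hypothetical solution can touch are the only ones that could ever draw level with $p$, which forces $|\mathcal R|$ to be $\ell t+1$ and requires the $\Delta$\nobreakdash-bookkeeping to be carried out uniformly for ${\tt Borda}_\uparrow$, ${\tt Borda}_\downarrow$, and ${\tt Borda}_{av}$, with the small\nobreakdash-$|C|$ boundary cases of the last variant absorbed into $\mathcal R$. Once this is established, the finite type count and the ILP formulation are routine.
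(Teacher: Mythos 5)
Your proposal is correct, but it takes a genuinely different route from the paper. The paper first enumerates ``vote type combinations'' (types recording only $|v|$ and ${\tt pos}(v,p)$, with a deletion count per type), which fixes a hypothetical final score $F(p)$; only relative to this $F(p)$ does it define the critical set $C^+$ of at most $t\ell$ candidates, then it shrinks $V$ by keeping at most $\ell+1$ votes per similarity class (same length and same positions of all $C^+\cup\{p\}$ candidates, justified by an exchange argument) and finishes with brute-force enumeration of size-$\le\ell$ subsets. You instead prove a single, combination-independent critical-set lemma: the $\ell t+1$ candidates with smallest ${\tt diff}(p,c,V)$ suffice for \emph{every} deletion set of size at most $\ell$, via the ``untouched rival'' $c^{*}\in\mathcal R$ and the uniform bound ${\tt diff}(p,c,D)\le\Delta={\tt diff}(p,c^{*},D)$, which rests on the fact that a ranked candidate never scores below an unranked one (this actually holds for ${\tt Borda}_{av}$ unconditionally, since $m-|v|\ge\frac{m-|v|-1}{2}$, so your $|C|\ge 2t-1$ caveat is superfluous though harmless given that you fold small $|C|$ into $\mathcal R=C\setminus\{p\}$); you then bucket the relevant votes by their trace on $\mathcal R\cup\{p\}$ and decide feasibility of an ILP with $f(\ell,t)$ variables by Lenstra. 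Your route buys a cleaner argument with no outer enumeration over type combinations and no exchange/replacement reasoning, at the price of invoking ILP in fixed dimension; the paper's route is elementary and self-contained with an explicit $O^{*}\bigl(2^{t\ell^{2}+\ell}\cdot t^{t\ell+\ell}\cdot(\ell+1)^{O(t^{2})+\ell}\bigr)$ bound. Note also that your lemma exploits that deletions can only remove score mass (this is why a fixed $\mathcal R$ works), mirroring the paper's remark that the analogous candidate set does not work for the adding-votes case, where a different reduction is needed.
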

\begin{proof}
Let~$(E=(C, V), \ell)$ be a~$t$-CCDV-$\delta$ instance with~$n=|V|$ and~$m=|C|$.
Our algorithm consists of the following 5 steps% using a main ideal of classification
:\\
\indent
(1) Calculate the score of each candidate $c \in C$ by the rule of $\delta \in \{{\tt Borda}_{\uparrow}, {\tt Borda}_{\downarrow}, \\
{\tt Borda_{av}}\}$;\\
\indent
(2) Classify each vote by $|v|$ and ${\tt pos}(v, p)$;\\
\indent
(3) Enumerate all feasible vote type combinations in the solution and calculate the final score $F(p)$ of $p$ for each vote type combination;\\
\indent
(4) For each possible vote type combination, remove redundant votes according to the combination and the corresponding $F(p)$;\\
\indent
(5) Check all subsets $V''$ of the set of the remaining votes $V'$ of size at most $\ell$,  whether deleting all votes in $V''$ can make $p$ be the unique winner of the election.

In the first step of the algorithm, we calculate the score of each candidate $c\in C$, that is, ${\tt score}(c,V)$.
To simplify the presentation, we adopt a slightly different definition of the Borda-score in the ${\tt Borda}_{av}$ case, that is, ${\tt score}(c,V)=\sum_{v\in V}({\tt score}(c,v)-\frac{m-|v|-1}{2})$, where ${\tt score}(c,v)$ is the score of $c$ receiving from vote $v$ according to the definition of Borda$_{av}$ in Section 2.
%For instance, for $m=5$, $t=3$ and a vote $v: c_1>c_2>c_3$, the candidates $c_1$, $c_2$, and $c_3$ receive from $v$ a score of $3.5$, $2.5$ and $1.5$, respectively, and the other two candidates receive a score of $0$ according to the modified Borda$_{av}$.
Note that, the new definition does not change the value of ${\tt diff}(c,c',V)$ for any two candidates $c$ and $c'$.
Then, we classify votes $v$ into different types according to the position of $p$ in $v$ and the length of $v$: A vote $v$ is of type $ty_i^j$ with $0\leq i={\tt pos}(v,p)\leq |v|$, $0\leq j=|v|\leq t$.
There are totally at most $\frac{(t+1)(t+2)}{2}$ different types.\\
\indent
Then,
we enumerate all ``feasible vote type combinations''.
A feasible vote type combination $T$ is a collection of pairs $(ty_i^j,\ell_i^j)$ with $0\leq j\leq t$, $0\leq i\leq j$, $0\leq \ell_i^j\leq \ell$, and $\sum_{i=0}^t\sum_{j=0}^t\ell_i^j\leq \ell$.
A pair $(ty_i^j,\ell_i^j)$ means that $\ell_i^j$ votes of type $ty_i^j$ are deleted.
For each feasible vote type combination, we compute a ``final'' score $F(p)$ for $p$ based on ${\tt score}(p,V)$.
Note that ${\tt score}(p,v)$ for all votes $v$ of type $ty_i^j$ is the same and easy to compute without knowing the exact order of $v$: if $i=0$, then ${\tt score}(p,v)=0$ $({\tt Borda}_{\uparrow}, {\tt Borda}_{\downarrow}, {\tt Borda}_{av})$; otherwise, ${\tt score}(p,v)=m-i$ $({\tt Borda_{\uparrow}}, {\tt Borda}_{av})$ or $j-i+1$ (Borda$_\downarrow$).
Then, set ${\tt score}(p, ty_i^j):={\tt score}(p,v)$ and $F(p):={\tt score}(p,V)-\sum {\tt score}(p, ty_i^j)$.
Next, we divide the candidates in $C\setminus{\{p\}}$ into two subsets: $C^+=\{c\in C\ |\ {\tt score}(c,V)\geq F(p)\ {\tt and}\ c\neq p\}$ and $C^-=C\setminus{(C^+\cup \{p\})}$,
and remove some redundant votes with respect to the position of candidates in $C^+$.
We call two votes $v$ and $v'$ are similar, if they have the same length and $\forall c\in (C^+\cup \{p\})$, ${\tt pos}(v,c)={\tt pos}(v',c)$.
The elimination rule is as follows:
\begin{itemize}
    \item {\bf [DR]}: If there are more than~$\ell+1$ pairwisely similar votes in~$V$, then we remove them until~$\ell+1$ many remain.
\end{itemize}
\indent
Let $V'$ be the set of votes remaining after applying DR exhaustively.
Finally, we iterate over all subsets $V''$ of $V'$ with $|V''|\leq \ell$.
For each $V''$, we examine whether $p$ is the unique winner of $E=(C,V\setminus{V''})$.
The pseudo-code of the algorithm is shown in Algorithm~\ref{alg:algorithm-CCDV}.

\begin{algorithm}[tb]
\caption{FPT algorithm for $t$-CCDV-$\delta$}
\label{alg:algorithm-CCDV}
\raggedright
\textbf{Input}: An instance of $t$-CCDV-$\delta$: $(E=(C,V),\ell)$\\
\textbf{Output}: A solution $V''$ or no such solution

\begin{algorithmic}[1] %[1] enables line numbers
\FOR{each candidate $c\in C$}
\STATE Calculate ${\tt score}(c,V)$
\ENDFOR
\FOR{each vote $v\in V$}
\STATE Compute $ty_{{\tt pos}(p,v)}^{|v|}$
\ENDFOR
\FOR{each combination of vote types}
\STATE Calculate $F(p)$ and classify $C$ into $C^+$ and $C^-$
\IF{$|C^+|>t*\ell$}
\STATE Go to the next for-iteration
\ENDIF
\STATE Construct $V'$ by eliminating some votes from $V$ according to DR,
\FOR{each subset $V''$ of $V'$}
\IF{$|V''|\leq \ell$}
\IF{$p$ is the unique winner of $E=(C,V\setminus{V''})$}
\STATE Return $V''$
\ENDIF
\ENDIF
\ENDFOR
\ENDFOR
\end{algorithmic}
\end{algorithm}

{\bf Claim.} The above algorithm runs in $O^*(2^{t\cdot\ell^2+\ell}\cdot t^{t\cdot\ell+\ell}\cdot(\ell+1)^{\frac{(t+1)(t+2)}{2}+\ell})$ time.\\
\indent
{\bf Proof of claim.}
In order to make~$p$ win the election, we have to make all candidates $c\in C^+$ satisfying ${\tt score}(c,V\setminus{V''})<F(p)$ where $V''$ is the set of votes deleted.
The first step is clearly doable in polynomial time.
The number of all feasible vote type combinations and thus possible $F(p)$ values is at most $(\ell+1)^{\frac{(t+1)(t+2)}{2}}$, since at most $\ell$ votes of each type can be deleted.
Next, for each possible $F(p)$,~$|C^+|\le t\cdot\ell$, since we are allowed to delete at most~$\ell$ $t$-truncated votes and deleting one vote $v$ can decrease the scores of at most $t$ candidates.
Two votes~$v$ and~$v'$ are not similar, if ${\tt pos}(v, c)\ne {\tt pos}(v',c)$ for one~$c\in (C^+\cup\{p\})$ or $|v|\neq |v'|$.
Therefore, there are less than~$t\cdot 2^{|C^+|+1}\cdot t!<2^{|C^+|+1}\cdot t^{t+1}$ pairwisely non-similar votes and each of them can have at most~$\ell+1$ similar votes in~$V'$ by DR,
which leads to less than~$2^{t\cdot \ell+1}\cdot t^{t+1}\cdot(\ell+1)$ remaining votes. Then, examining all size-$\le \ell$
subsets of the remaining votes can be done in~$O^*(2^{t\cdot\ell^2+\ell}\cdot t^{t\cdot\ell+\ell}\cdot(\ell+1)^\ell)$ time. Therefore, the total running time is bounded by~$O^*(2^{t\cdot\ell^2+\ell}\cdot t^{t\cdot\ell+\ell}\cdot(\ell+1)^{\frac{(t+1)(t+2)}{2}+\ell})$.

Next, we show the correctness of the algorithm.
It is clear that every output generated by our method%Algorithm~\ref{alg:algorithm-CCDV}
is a solution of the~$t$-CCDV-$\delta$ instance.
For the reversed direction, note that every solution~$V''$ for a~$t$-CCDV-$\delta$ instance conforms to one of the vote type combinations, since each vote in~$V''$ is of type~$ty_i^j$ for some~$0\le i\leq j \le |v|\leq t$, and the number of the votes in~$V''$ is bounded by~$\ell$. Let~$X$ denote this combination.
Clearly,~$X$ has
been enumerated by the
algorithm and the final score~$F(p)$ computed by the algorithm for~$X$
is then equal to~${\tt score}(p, V\setminus V'')$.
Further, note that the algorithm removes votes from~$V$ only by applying DR to similar votes. Thus, while
the algorithm considers~$X$, if no vote
in~$V''$ has been removed from~$V$ by DR, then the algorithm can
find~$V''$ by the brute-force enumeration of size-$\le \ell$
subsets. Suppose that there is at least one vote~$v\in V''$ removed by an
application of DR. This occurs, only if there are~$\ell+1$ votes
similar to~$v$, which remain in~$V$ after the application of
DR. Let~$R$ denote the set of these~$\ell+1$ votes. Clearly,~$R\setminus
V''\ne \emptyset$.
According to the definition of similar votes, for all~$v'\in R$ and all~$c\in (C^+\cup \{p\})$, we have~${\tt pos}(v,c)={\tt pos}(v', c)$ and $|v|=|v'|$, implying that~${\tt score}(c,V\setminus V'')={\tt score}(c, V\setminus (V''\setminus \{v\}\cup \{v'\}))$ for each vote~$v'\in (R\setminus V'')$ and each candidate~$c\in (C^+\cup\{p\})$.
Recall that~$C^+$ is the set of ``critical'' candidates for this combination~$X$, that is, the candidates whose scores are at least $F(p)$.
Since the candidates~$c\in C^-$ have~${\tt score}(c, V)<F(p)={\tt score}(p, V\setminus{V''})$ and vote deletions can only decrease the scores of candidates, we can conclude~${\tt score}(c,V\setminus V'')<F(p)$ for all $c\in C^-$.
By replacing~$v$ by~$v'$, we receive another solution for the~$t$-CCDV-$\delta$ instance. Repeating the above replacement, we construct a solution, whose votes are all contained in~$V'$ after DR has been exhaustively applied.
Thus, the brute-force enumeration of size-$\le \ell$ subsets can find this solution. This completes the proof of the correctness of the algorithm.
\end{proof}

The algorithm solving $t$-CCAV-$\delta$ shares some common features with the one for $t$-CCDV-$\delta$.
It also enumerates all feasible vote type combinations.
However, we have here only~$\frac{t(t+1)}{2}$ types $ty_i^j$, with $1\leq i \leq j\leq |v|\leq t$, since all added votes should contain~$p$.
The main difference lies in the candidates in $C^+$, since adding votes to increase the score of~$p$ might increase the scores of some candidates in~$C\setminus{\{p\}}$ as well, but we cannot identify these candidates in advance.
Thus, the concept of $C^+$ does not work here.
To overcome this difficulty, we apply a series of data reduction rules to reduce the ``redundant'' votes in unregistered vote set~$V^*$, which is the set of votes that can be added to the election, such that~$|V^*|$ can be bounded by a function of~$t$ and~$\ell$.
Finally, an enumeration of size-$\le \ell$ subsets is also applied to~$V^*$.

\begin{theorem}
\label{Top-CCAV-ell,t}
$t$-CCAV-$\delta$ with $\delta \in \{{\tt Borda}_{\uparrow}, {\tt Borda}_{\downarrow}, {\tt Borda}_{av}\\\}$ is FPT with respect to the number of additions $\ell$ and the maximal number  $t$ of ranked candidates in each vote.
\end{theorem}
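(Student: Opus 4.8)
The plan is to adapt the FPT algorithm for $t$-CCDV-$\delta$ from Theorem~\ref{Top-CCDV-ell,t} to the addition setting, with the main modification concerning how we bound the size of the candidate-vote interaction. As noted in the paragraph preceding the statement, we first observe that all $\le\ell$ added votes must rank $p$ (otherwise they do not help $p$ become the unique winner), so there are only $\frac{t(t+1)}{2}$ relevant vote types $ty_i^j$ with $1\le i\le j\le t$, and we may restrict $V^*$ to votes of these types. As before, we enumerate all feasible vote type combinations: a collection of pairs $(ty_i^j,\ell_i^j)$ with $\sum_{i,j}\ell_i^j\le\ell$. For each combination we can compute the exact number of added votes ranking $p$ at each position of each length, hence the exact final score $F(p)={\tt score}(p,V)+\sum\ell_i^j\cdot{\tt score}(p,ty_i^j)$ of $p$; note that for ${\tt Borda}_{av}$ we use the shifted score convention from Theorem~\ref{Top-CCDV-ell,t} so that ${\tt score}(p,v)$ depends only on the type.

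The new difficulty, which I expect to be the main obstacle, is that $C^+$ (the set of ``critical'' candidates whose score could reach $F(p)$) cannot be identified in advance: adding a vote can raise some opponents' scores as well, and which opponents matters depends on which votes are chosen. To handle this, instead of pruning by positions of a fixed critical set, I would apply a sequence of data reduction rules directly to $V^*$ so that $|V^*|$ becomes bounded by a function of $t$ and $\ell$. The key structural fact is that once a vote type combination is fixed, an optimal choice within that combination only cares, for each vote, about the relative order of $p$ and of the ``dangerous'' candidates; but since any added vote ranks at most $t-1$ candidates besides $p$, among the at-most-$\ell$ votes we add we touch at most $t\cdot\ell$ candidates. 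The reduction rule keeps, for each vote type and for each way a vote of that type can arrange $p$ together with any set of at most $\min(t,\ell)$ distinguished opponents, at most $\ell+1$ representatives; more robustly, one classifies votes in $V^*$ by their full content restricted to any fixed reference set of size $O(t\ell)$ of ``potentially critical'' candidates, obtained by taking the $t\ell$ candidates of highest current score together with $p$. Since deleting-analog arguments show that any candidate that could ever tie $F(p)$ after adding $\le\ell$ votes lies among these, we only need to track positions within this reference set, giving at most $O\big(2^{t\ell}\cdot t^{t}\big)$ vote classes, each capped at $\ell+1$ representatives by the reduction rule, hence $|V^*|$ bounded by $2^{O(t\ell)}\cdot t^{O(t)}\cdot(\ell+1)\cdot(\ell+1)^{O(t^2)}$ after ranging over all combinations.

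The correctness argument mirrors the one for Theorem~\ref{Top-CCDV-ell,t}: given any solution $S\subseteq V^*$ with $|S|\le\ell$, it conforms to some enumerated vote type combination $X$, the algorithm's computed $F(p)$ equals ${\tt score}(p,V\cup S)$, and if some vote $v\in S$ was removed by a reduction rule then there remain $\ell+1$ votes of the same class, one of which, say $v'$, is outside $S$; by definition of the class, ${\tt pos}(v,c)={\tt pos}(v',c)$ and $|v|=|v'|$ for every candidate $c$ in the reference set and for $p$, so ${\tt score}(c,V\cup(S\setminus\{v\}\cup\{v'\}))={\tt score}(c,V\cup S)$ for all such $c$. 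Candidates outside the reference set have current score too low to reach $F(p)$ even after $\le\ell$ additions, so the swap yields another solution; iterating produces a solution contained in the reduced $V^*$, which the brute-force enumeration over size-$\le\ell$ subsets then finds. Finally, for each candidate $c$ one checks ${\tt score}(c,V\cup S)<F(p)={\tt score}(p,V\cup S)$ directly, so the winner test in the last step is correct. Combining the $(\ell+1)^{O(t^2)}$ combinations, the $2^{O(t\ell)}t^{O(t)}$ bound on non-equivalent votes, the $\ell+1$ multiplicity, and the $|V^*|^{O(\ell)}$ enumeration of subsets gives a running time of the form $O^*\big(f(t,\ell)\big)$, establishing that $t$-CCAV-$\delta$ is FPT with respect to $t$ and $\ell$.
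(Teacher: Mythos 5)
Your setup (vote types, enumeration of combinations, computing $F(p)$, and the goal of shrinking $V^*$ to a size bounded in $t$ and $\ell$) matches the paper, and you correctly identify the central obstacle: the critical candidates cannot be fixed in advance. However, your resolution of that obstacle contains a genuine gap. You define the reference set as ``the $t\ell$ candidates of highest current score together with $p$'' and assert that any candidate that could reach $F(p)$ after adding at most $\ell$ votes lies in this set. That claim is false and is not implied by any ``deleting-analog argument.'' Under ${\tt Borda}_{\uparrow}$ (and ${\tt Borda}_{av}$) a single added vote can award a ranked candidate up to $m-1$ points, a quantity not bounded by any function of $t$ and $\ell$, so no cutoff based on current score isolates a small set of potentially dangerous candidates. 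Concretely, if many more than $t\ell$ candidates all have current score just below $F(p)$ (e.g.\ all tied), then \emph{every} candidate that gets ranked in some added vote can be pushed to at least $F(p)$, and which ones these are depends entirely on the chosen votes. Consequently your swap argument breaks: two votes in the same class agree only on the positions of $p$ and of the reference-set candidates, so the replacement vote $v'$ may rank non-reference candidates that are pushed to $\geq F(p)$ (either by $v'$ alone, or in combination with the other added votes that also rank them), and $S\setminus\{v\}\cup\{v'\}$ need not be a solution. The step ``candidates outside the reference set have current score too low to reach $F(p)$ even after $\le\ell$ additions'' is exactly the missing (and unprovable) ingredient.

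The paper avoids any attempt to pre-identify critical candidates. It first applies a \emph{cleaning rule} that deletes every unregistered vote $v$ with ${\tt score}(c,v)+{\tt score}(c,V)\ge F(p)$ for some $c$ (such a vote can never occur in a solution conforming to the current combination), then an identical-votes rule, and then a hierarchy of reduction rules DR$1$--DR$(t-1)$ with carefully chosen thresholds $g_1,\dots,g_{t-1}$. The thresholds are large enough that, in the replacement argument, among the surviving votes that agree with a removed solution vote $v$ except on $r$ positions, one can always find a replacement $w$ whose candidates at those $r$ positions avoid \emph{all} candidates ranked by the other solution votes; for such candidates the cleaning rule alone then guarantees their final score stays below $F(p)$, since they gain points only from $w$. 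This diversity-plus-cleaning mechanism is what your classification by a fixed reference set lacks, and without it the bound on $|V^*|$ does not preserve the existence of a solution. To repair your proof you would essentially have to import this mechanism (per-vote safety against $F(p)$ and counting thresholds that force an available replacement avoiding the $\le(t-1)(\ell-1)$ candidates used by the rest of the solution), at which point it coincides with the paper's argument.
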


\begin{proof}
Let~$(E=(C, V, V^*), \ell)$ be an instance of $t$-CCAV-$\delta$.
In the first step of the algorithm, we compute the score of each candidate~$c\in C$ with respect to~$V$, that is, ${\tt score}(c, V)$.
Then, we classify votes into different types and enumerate all feasible vote type combinations.
The rule of classification is the same as the one for deleting votes.
For each of these combinations, we compute~$F(p)$ in a similar way.
Then, we apply the following two rules.
Here, two votes~$v$ and~$v'$ are identical, if for all~$c\in C$,
it holds~${\tt pos}(v,c)={\tt pos}(v',c)$.
\begin{itemize}
    \item {\bf [Cleaning Rule]}: If there exists a vote~$v\in V^*$ with~${\tt
  score}(c, v)\\ + {\tt score}(c, V)\ge F(p)$ for a candidate~$c\in C$, then remove~$v$ from~$V^*$.
  \item {\bf [Identical Votes Rule]}:  If there are more than~$\ell+1$ identical votes in~$V^*$, then we keep only~$\ell+1$ many such identical votes in~$V^*$.
\end{itemize}
Next, we partition the votes in~$V^*$ according to their types,
$V^*_i=\{v\in V^*\ |\ p=c(v, i)\}$.
We define the following upper-bounds:
\begin{itemize}
    \item $g_1=(t-1)( \ell-1)\ell$,
    \item $g_2=2(t-1)(\ell-1)g_1=2(t-1)^2(\ell-1)^2\ell$,\\
    $\vdots$
    \item $g_{t-1}=(t-1)(t-1)(\ell-1)g_{t-2}=(t-1)!(t-1)^{t-1}(\ell-1)^{t-1}\ell$.
\end{itemize}
Then, we perform the following operations to each~$V^*_i$.
\begin{itemize}
    \item {\bf [DR1]}: If there are more than $g_1+1$ votes in~$V^*_i$, which are all identical
except for one position~$1\le j\le t$ with~$i\ne j$, then keep~$g_1+1$ many of them in~$V^*_i$.
\item {\bf [DR2]}: If there are more than~$g_2+1$ votes in~$V^*_i$, which are all identical
except for two positions~$1\le j_1<j_2\le t$ with~$i\ne j_1$ and~$i\ne j_2$, then
keep~$g_2+1$ many of them in~$V^*_i$.\\
$\vdots$
\item {\bf [DR(t-1)]}: If there are more than~$g_{t-1}+1$ votes in~$V^*_i$, which are all identical
except for~$t-1$ positions~$1\le j_1<j_2\ldots<j_{t-1}\le t$ with~$i\notin\{j_1, \ldots, j_{t-1}\}$, then
keep~$g_{t-1}+1$ many of them in~$V^*_i$.
\end{itemize}
The votes removed by the above operations are called redundant votes.
Note that we apply first the cleaning and identical votes rules.
Then, DR1-DR$(t-1)$ are applied in the order of the presentation, that is, for
each~$1\le i\le t$ and each~$1<j\le t-1$, we apply DR$j$ to~$V^*_i$ only
if DR$(j-1)$ is not applicable to~$V^*_i$.
Finally, we iterate over all size-$\le \ell$ subsets of the remaining votes in~$\bigcup_{1\le i\le t}V^*_i$ and examine whether one of them is a solution for the given instance.
The pseudo-code of this algorithm is shown in Algorithm~\ref{alg:algorithm-CCAV}.

\begin{algorithm}[tb]
\caption{FPT algorithm for $t$-CCAV-$\delta$}
\label{alg:algorithm-CCAV}
\raggedright
\textbf{Input}: An instance of $t$-CCAV-$\delta$: $(E=(C,V,V^*), \ell)$\\
\textbf{Output}: A solution $V''$ or no such solution

\begin{algorithmic}[1] %[1] enables line numbers
\FOR{each candidate $c\in C$}
\STATE Calculate ${\tt score}(c,V)$
\ENDFOR
\FOR{each vote $v\in V^*$}
\STATE Compute $ty^{|v|}_{{\tt pos}(p,v)}$
\STATE Add $v$ to $V_{pos(v,p)}^*$
\ENDFOR
\FOR{each combination of vote types}
\STATE Calculate $F(p)$
\FOR{$1\leq i\leq t$}
\STATE Apply Cleaning Rule and Identical Votes Rule to $V^*_i$
\STATE Apply ${\tt DR}_{i}$ to $V^*_i$
\ENDFOR
\STATE Set $V'=\bigcup_{i=1}^tV_i^*$
\FOR{each subset $V''$ of $V'$}
\IF{$|V''|\leq \ell$}
\IF{$p$ is the unique winner of $E=(C,V\setminus{V''})$}
\STATE Return $V''$
\ENDIF
\ENDIF
\ENDFOR
\ENDFOR
\end{algorithmic}
\end{algorithm}

{\bf Claim.} The algorithm runs in $O^*((t!)^{\ell} \cdot (t\cdot\ell)^{t\cdot\ell}\cdot(\ell+1)^{\frac{t(t+1)}{2}+\ell})$ time.\\
\indent
{\bf Proof of Claim.}
As in the proof of Theorem~\ref{Top-CCDV-ell,t}, there are at most~$(\ell+1)^{\frac{t(t+1)}{2}}$ vote type combinations. Then, for each combination, there
are~$t$ subsets~$V^*_i$ in the partition of~$V^*$ and by the above data reduction rules,
$|V^*_i|\le (g_1+1)+\cdots+(g_{t-1}+1)\leq (t-1)!\cdot(t-1)^{t}\cdot(\ell-1)^{t-1}\cdot\ell$ for each~$1\le i\le t$. Therefore, there are less than~$t!\cdot(t\cdot\ell)^{t}\cdot (\ell+1)$ votes in~$\bigcup_{1\le i\le t} V^*_i$.
Furthermore, the application of the cleaning and identical votes rules needs clearly polynomial time.
DR1-DR$(t-1)$ are all doable in $O(n\cdot t^t)$ time, since we can find in $O(t^{t})$ time all votes in $V^*$, that are identical to a given vote except for $i$ positions for all $1\leq i \leq t$.
Finally, the enumeration of size-$\le \ell$ subsets of $\bigcup_{1\leq i \leq t}V^*_i$ has to examine less than~$(t!)^{\ell}\cdot(t\cdot\ell)^{t\cdot\ell}\cdot(\ell+1)^{\ell}$ possibilities. We need polynomial time to verify
whether one of these subsets is a solution for the given
instance. Therefore, the total running time of the algorithm is
bounded by~$O^*((t!)^{\ell} \cdot (t\cdot\ell)^{t\cdot\ell}\cdot(\ell+1)^{\frac{t(t+1)}{2}+\ell})$.

The algorithm removes votes by applying the cleaning rule, the
identical votes rule, or DR1-DR$(t-1)$. Since the enumeration of
vote type combinations and size-$\le \ell$ subsets are obviously correct,
the correctness of the algorithm follows directly from the correctness
of these rules.

The cleaning rule is correct, since the vote removed by the rule
clearly cannot be added to~$V$. Adding it to~$V$ would result in that the
final score of a candidate~$c\in (C\setminus{\{p\}})$ exceeds~$F(p)$, the final score
of~$p$ with respect to the current combination of vote types. The
correctness of the identical rule is obvious, since we cannot add more
than~$\ell$ votes to~$V$.

Note that the algorithm iterates over the combinations of vote
types one-by-one and DR$r$'s are applied under the assumption that
the solution sought for conforms to the current combination. Therefore, we
consider in the following an arbitrary but fixed combination~$X$ of vote types.
We show the correctness of DR$r$ in the order from~$r=1$ to~$r=t-1$,
and by proving that every application of DR$r$ is correct with the following claim.

\medskip
\noindent
{\bf Claim.} Let~$W$ be the union of all $V^*_i$'s before the application of DR$r$
and~$W'$ be the union after the application. If there is a
solution for the given $t$-CCAV-$\delta$ instance which conforms to~$X$
and adds at most~$\ell$ votes in~$W$ to~$V$, then there is a solution
which conforms to~$X$ and adds at most~$\ell$ votes in~$W'$ to~$V$.

\medskip

\noindent
{\bf Proof of Claim.} Let~$Z\subseteq W$ with~$|Z|\le \ell$ be a solution for
the given instance conforming to~$X$. If~$Z\subseteq W'$, then we are done; otherwise,
let~$v\in (Z\setminus W')$. Let~$C(v)$ be the set of candidates, who
occur in the vote~$v$, and~$C^+=\cup_{u\in Z}C(u)$.
Clearly, $|C^+|\leq (t-1)\ell+1$ and $|C^+\setminus{C(v)}|\leq (\ell-1)(t-1)$.

First consider~$r=1$. Since~$v$ is removed from~$W$ by DR1, there is
a~$V^*_i\subseteq W$ containing more than~$g_1+1=(t-1)(\ell-1)\cdot\ell+1$ votes,
including~$v$, which are all identical except for one position~$1\le
j\le t$ with~$i\ne j$. Let~$Y$ be the set of these votes. Then,
$|\bigcup_{u\in Y}\{c(u,j)\}|>(t-1)(\ell-1)$, since otherwise, there would be more
than~$\ell+1$ identical votes in~$V^*_i$, contradicting to the fact that the
cleaning rule is not applicable to~$V^*_i$. Then, there is a
vote~$w\in W'$ that differs from~$v$ only in the $j$-th position and
has~$c(w,j)\notin (C^+\setminus C(v))$.
Thus, set~$Z'=Z\setminus \{v\}\cup \{w\}$. By
the cleaning rule, the score of~$c(w,j)$ in~$V\cup Z'$ is less
than~$F(p)$, that is, the final score of~$p$ in~$V\cup Z$ according to~$X$.
The scores of other candidates are not increased by the
replacement. Thus,~$p$ has the
highest score in~$V\cup Z'$. Repeating the above argument to other votes
not in~$W'$, we can get a solution~$Z'$ with~$Z'\subseteq W'$.

Assume that the claim is true for DR1, ..., DR$(r-1)$. Consider an
application of DR$r$. From the precondition of DR$r$, there is
a~$V^*_i$ containing more than~$g_{r}+1=r!(t-1)^r(\ell-1)^r\cdot\ell+1$ votes, which are all identical
except for~$r$ positions~$1\le j_1<j_2\ldots<j_r\le t$
with~$i\notin\{j_1, \ldots, j_r\}$. Let~$Y$ be the set containing
these votes. Again, consider the $j_1$-th position. By the same reason,
$|\bigcup_{u\in Y}\{c(u,j_1)\}|>(t-1)(\ell-1)$, since otherwise, there would be more
than~$(r-1)!(t-1)^{r-1}(\ell-1)^{r-1}\cdot\ell+1$ votes in~$W$, which satisfy the precondition
of DR$(r-1)$, contradicting to the order of applications.
Let~$Y_1$ contain the votes~$u\in Y$ with~$c(u,j_1)\notin (C^+\setminus C(v))$. Note that $|Y_1|\ge g_r-(t-1)(\ell-1)g_{r-1}$.
Then, consider the $j_2$-th position. Again, there are at least $|Y_1|-(t-1)(\ell-1)g_{r-1}\ge g_r-2(t-1)(\ell-1)g_{r-1}$
votes~$u \in Y_1$, with~$c(u,j_2)\notin C^+\setminus C(v)$.
Let~$Y_2$ denote the set of these votes.
In this way, we can conclude that there is $Y_r\subset Y_{t-1}$ with $|Y_r|\ge g_r-r(t-1)(\ell-1)g_{r-1}>0$.
This means that there is a vote~$w\in W'$, which is identical to~$v$
except for~$r$ positions~$1\le j_1<j_2\ldots<j_r\le t$. Moreover, the
candidates~$c(w,j_1), \ldots, c(w,j_r)$ are not in~$C^+\setminus C(v)$.
Compared to~$Z$, adding $Z'=Z\setminus
\{v\}\cup \{w\}$ to~$V$ does not increase the scores of the candidates
other than~$c(w,j_1), \dots, c(w,j_r)$, whose final scores are less
than~$F(p)$ by the cleaning rule. Thus, $Z'$ is another
solution to the given instance. Repeating the above argument
to other votes in~$Z\setminus W'$, we can construct another solution
satisfying the claim. This completes the proof of the
claim and the proof of the correctness.
\end{proof}

Next, we consider the complexity of the control by candidate operations with $t$-truncated votes.
Note that no unregistered/registered candidate can turned to be register/unregistered candidate by candidate deletions/additions.
When $t=1$, there are at most one ranked candidate in each vote.
For $1$-CCAC-$\delta$, candidate additions cannot decrease the score of any candidate.
Therefore, we can conclude that if $p$ is not the unique winner of the current election, it is impossible to make $p$ the unique winner by adding candidates.
For $1$-CCDC-$\delta$, the candidate deletions can only decrease the score of the deleted candidates.
It means that $p$ can become the unique winner of the election if and only if the number of the candidates, whose scores are at least $p$'s before deleting candidates, does not exceed the number of operations $\ell$.
The details of our algorithm are shown in Algorithm~\ref{alg:algorithm-CCDC}.

\begin{algorithm}[tb]
\caption{Polynomial-time algorithm for $1$-CCDC-$\delta$}
\label{alg:algorithm-CCDC}
\raggedright
\textbf{Input}: An instance of $1$-CCDC-$\delta$: $(E=(C,V), \ell)$, where $C=\{c_1, c_2, \cdots, c_m\}\cup \{p\}$, $V=\{v_1, v_2, \cdots, v_n\}$ and $\ell$ is a \\
\textbf{Output}: A subset $C'$ of $C$ satisfying $|C'|\leq \ell$ and $p$ is the unique winner of election $(C/C', V)$ or there is no solution for instance $(E=(C,V), \ell)$

\begin{algorithmic}[1] %[1] enables line numbers
\FOR{each candidate $c\in C$}
\STATE Calculate ${\tt score}(c,V)$
\ENDFOR
\STATE Set $C^+=\emptyset$
\FOR{each candidate $c\in C/\{p\}$}
\IF{${\tt score}(c,V)\geq {\tt score}(p,V)$}
\STATE $C^+ = C^+\cup \{c\}$
\ENDIF
\ENDFOR
\IF{$|C^+|>\ell$}
\STATE there is no solution for $(E=(C,V), \ell)$
\ELSE
\STATE $C^+$ is the solution for $(E=(C,V), \ell)$
\ENDIF
\end{algorithmic}
\end{algorithm}

\begin{theorem}
\label{thm:1-p}
 $1$-CCAC-$\delta$ and $1$-CCDC-$\delta$ with $\delta \in \{{\tt Borda}_{\uparrow}, {\tt Borda}_{\downarrow}, {\tt Borda}_{av}\}$ can be solved in $O(n\cdot m)$ time.
\end{theorem}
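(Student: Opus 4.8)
The plan is to exploit the fact that with $t=1$ every vote ranks exactly one candidate (or none), so each candidate's score is essentially a weighted count of the votes that top-rank it. First I would prove a small structural lemma: for any candidate set $C'$ with $m'=|C'|\geq 2$ and any two candidates $a,b\in C'$, letting $n_x$ be the number of votes whose single ranked candidate is $x$, one has ${\tt diff}(a,b,V)=\gamma_\delta(m')\cdot(n_a-n_b)$, where $\gamma_\delta(m')>0$ for every rule; indeed $\gamma_\delta(m')=m'-1$ for ${\tt Borda}_{\uparrow}$, $\gamma_\delta(m')=1$ for ${\tt Borda}_{\downarrow}$, and $\gamma_\delta(m')=m'/2$ for ${\tt Borda}_{av}$. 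This is a routine case distinction over the three rounding conventions applied to votes of length $0$ and $1$. Its upshot is that the ``$p$ is the unique winner'' relation on any candidate subset depends only on the $n$-values of its members, never on $m'$ itself; the degenerate case $m'=1$, in which $p$ trivially wins, is handled separately.

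Next I would dispatch $1$-CCAC-$\delta$. Adding candidates from $C^*$ does not change $n_c$ for any registered candidate $c$; it only turns some previously length-$0$ votes into length-$1$ votes for the newly added candidates and introduces those candidates as new competitors, each of which $p$ must also beat. Hence, by the lemma, adding no candidate is weakly optimal, so $p$ can be made the unique winner by adding candidates if and only if $p$ is already the unique winner of $(C,V)$ --- which is excluded by the standing assumption. The algorithm therefore computes the scores and simply rejects. The point requiring care here is to argue via ${\tt diff}$ (equivalently, via $\gamma_\delta(m')>0$) rather than via monotonicity of raw scores, since under ${\tt Borda}_{\uparrow}$ and ${\tt Borda}_{av}$ the raw scores of registered candidates do grow with $m'$.

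Finally, for $1$-CCDC-$\delta$, deleting a candidate $c\neq p$ leaves $n_a$ unchanged for every surviving candidate $a$ and merely removes $c$ from the contest. So after deleting a set $D\subseteq C\setminus\{p\}$, $p$ is the unique winner if and only if $n_p>n_c$ for every surviving $c$, i.e.\ if and only if $D\supseteq C^+:=\{c\in C\setminus\{p\}\mid n_c\geq n_p\}$; by the lemma (applied with the original $m$) this set equals $\{c\neq p\mid {\tt score}(c,V)\geq {\tt score}(p,V)\}$. Thus a solution exists if and only if $|C^+|\leq \ell$, which is exactly the test performed by Algorithm~\ref{alg:algorithm-CCDC} (candidates outside $C^+$ are already dominated by $p$, so deleting them is never useful). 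Since computing ${\tt score}(c,V)$ for all $c$ takes $O(n\cdot m)$ time and forming $C^+$ and comparing its size with $\ell$ takes $O(m)$ time, the total running time is $O(n\cdot m)$. I do not expect a genuine obstacle; the only delicate points are the sign bookkeeping for $\gamma_\delta(m')$ across the three variants, the degenerate case in which only $p$ survives, and the observation that, because at most $\ell$ operations are allowed, ``add nothing''/``delete exactly $C^+$'' is optimal.
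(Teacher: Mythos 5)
Your proposal is correct and follows essentially the same route as the paper: for $1$-CCAC-$\delta$ candidate additions cannot change the standing of $p$ relative to the registered candidates (so, under the standing assumption that $p$ does not already win, the answer is no), and for $1$-CCDC-$\delta$ a solution exists iff $|C^+|\leq\ell$ with $C^+=\{c\neq p \mid {\tt score}(c,V)\geq{\tt score}(p,V)\}$, which is exactly the test of Algorithm~\ref{alg:algorithm-CCDC} and yields the $O(n\cdot m)$ bound. Your $\gamma_\delta(m')$ lemma is a somewhat more careful justification than the paper's informal raw-score phrasing (under ${\tt Borda}_{\uparrow}$ and ${\tt Borda}_{av}$ raw scores do vary with the number of candidates, so the argument must indeed go through ${\tt diff}$), but the decomposition, algorithm, and running-time analysis are the same.
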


Next, we consider the complexity of $t$-CCAC-$\delta$ and $t$-CCDC-$\delta$ with $t\geq 2$.
For $t$-CCAC-${\tt Borda}_{\uparrow}$, after we add a candidate $c'$ in $v$, the score of each candidate $c$ with ${\tt pos}(v,c)>{\tt pos}(v,c')$ will be decreased by one, while the scores of other candidates remain unchanged.
Similarly, for $t$-CCDC-${\tt Borda}_{\downarrow}$, after we delete a candidate $c'$ in $v$, the score of each candidate $c$ with ${\tt pos}(v,c)<{\tt pos}(v,c')$ will be decreased by one, while the scores of other candidates remain unchanged.

To show the score of each candidate after the control operations more clearly, we present the following two examples.

\vbox{}
{\bf Example 4:}Given an instance of $3$-CCAC-${\tt Borda}_{\uparrow}$ $(E=(C, C^*, V), \ell)$ where $C=\{c_1, p\}$, $C^*=\{c_2, c_3\}$, $V=\{v_1, v_2\}$, $\ell=1$, and $v_1: p>c_3>c_1$, $v_2: c_2>c_1>p$.
Before adding candidates, $v_1$ is $p>c_1$ and $v_2$ is $c_1>p$.
The score of $p$'s is equal to $c_1$'s, ${\tt score}(p, V)={\tt score}(c_1, p)=1$.
If we add the candidate of $c_2$, vote $v_2$ turns to be $c_2>c_1<p$, it holds that ${\tt score}(p, V)={\tt score}(c_1, p)=2$;
If we add the candidate of $c_3$, vote $v_1$ turns to be $p>c_3>c_1$, the score of $c_1$ is 2 and the score of $p$ is 3.
Therefore, adding $c_3$ can make $p$ being the unique winner of the election.
$\{c_3\}$ is a solution of $3$-CCAC-${\tt Borda}_{\uparrow}$ $(E=(C, C^*, V), \ell)$.

\vbox{}
{\bf Example 5:}Given an instance of $3$-CCDC-${\tt Borda}_{\downarrow}$ $(E=(C, V), \ell)$ where $C=\{c_1, c_2, c_3, p\}$, $V=\{v_1, v_2\}$, $\ell=1$, and $v_1: c_1>c_3>p$, $v_2: p>c_2>c_1$.
Before deleting candidates, the score of $p$'s is equal to $c_1$'s, ${\tt score}(p, V)={\tt score}(c_1, p)=5$.
If we delete the candidate of $c_2$, the score of $p$'s will decrease by 1 corresponding to $c_1$, ${\tt diff}(p, c_1, V)=3-4=-1$;
If we delete the candidate of $c_3$, the score of $c_1$'s will decrease by 1 corresponding to $p$, ${\tt diff}(p, c_1, V)=4-3=1$.
Therefore, deleting $c_3$ can make $p$ being the unique winner of the election.
$\{c_3\}$ is a solution of $3$-CCDC-${\tt Borda}_{\downarrow}$ $(E=(C, V), \ell)$.

\vbox{}
Similar observations can be made for $t$-CCDC-Borda$_{\uparrow}$ with deleting candidates and $t$-CCAC-Borda$_{\downarrow}$ with adding candidates.
Based in these observations and similar reductions from {\sc Dominating Set}, we get the following Theorems.
In the following, we present the details of the hardness proves of $t$-CCAC-${\tt Borda}_{\uparrow}$ and $t$-CCDC-${\tt Borda}_{\downarrow}$ with $t\geq 2$.

\begin{theorem}
\label{thm:top-CCAC-up-w}
For every constant~$t\ge 2$: $t$-CCAC-${\tt Borda}_{\uparrow}$ is NP-hard and W[2]-hard with respect to the number of additions~$\ell$.
\end{theorem}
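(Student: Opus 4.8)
The plan is to reduce from {\sc Dominating Set}. I would first observe that it suffices to prove hardness for $t=2$: under ${\tt Borda}_{\uparrow}$ a candidate ranked at position $i$ scores $m-i$ and an unranked candidate scores $0$, both independent of the vote length, so a $2$-truncated instance yields exactly the same scores when regarded as a $t$-truncated instance for any $t\ge 2$; hence W[2]-hardness (and NP-hardness) for $t=2$ carries over to every $t\ge 2$.

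Given a {\sc Dominating Set} instance $(\mathcal{G}=(\mathcal{V},\mathcal{E}),k)$ with $\mathcal{V}=\{v_1,\dots,v_{n'}\}$ and $d_i=|N[v_i]|$, I would construct the following $t$-CCAC-${\tt Borda}_{\uparrow}$ instance. For each vertex $v_i$ create one registered ``target'' candidate $a_i$ and one unregistered ``selector'' candidate $b_i$; set $C=\{p,a_1,\dots,a_{n'}\}$, $C^*=\{b_1,\dots,b_{n'}\}$, and $\ell=k$. All votes have length at most~$2$: for every ordered pair $(i,j)$ with $v_i\in N[v_j]$ one vote $b_j>a_i$ (which reads simply as the singleton $a_i$ before $b_j$ is added); additionally $n'+1$ copies of the singleton vote $p$; and, for every $i$, exactly $n'+1-d_i$ copies of the singleton vote $a_i$. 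These counts are chosen so that before any addition ${\tt score}(p,V)={\tt score}(a_i,V)=(n'+1)n'$ for all $i$ (so $p$ is not yet the unique winner), while every $b_j$, being unranked everywhere, has score $0$.

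The heart of the argument is the bookkeeping of scores after adding a set $S\subseteq C^*$ of $D=|S|$ selectors, which makes the scoring vector have length $m=n'+1+D$. Writing $\sigma_i=|\{\,j:v_i\in N[v_j],\ b_j\in S\,\}|$, I would verify that ${\tt score}(p,V)=(n'+1)(m-1)$, that ${\tt score}(a_i,V)=(n'+1)(m-1)-\sigma_i$ (adding $b_j$ pushes $a_i$ from the first to the second position of the vote $b_j>a_i$, costing it exactly one point there), and that ${\tt score}(b_j,V)=d_j(m-1)$ for each $b_j\in S$. Consequently ${\tt diff}(p,a_i,V)=\sigma_i$ and ${\tt diff}(p,b_j,V)=(n'+1-d_j)(m-1)>0$, so $p$ becomes the unique winner precisely when $\sigma_i\ge 1$ for all $i$, i.e.\ precisely when $\{\,v_j:b_j\in S\,\}$ is a dominating set of $\mathcal{G}$. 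Both directions of the equivalence follow at once, $|S|\le\ell=k$ corresponds to $|DS|\le k$, and since the reduction is polynomial-time and parameter-preserving it gives NP-hardness and W[2]-hardness in~$\ell$ simultaneously.

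The step I expect to be the main obstacle is exactly this score computation, because adding candidates lengthens the scoring vector and thereby shifts the scores of \emph{all} ranked candidates at once; one must check that this shift enters every relevant comparison only through the common factor $m-1$ and hence cancels, and in particular that a selector $b_j$---which gains the sizeable score $d_j(m-1)$ as soon as it enters the votes $b_j>a_i$---can still never catch up with $p$; this is precisely why $p$ is given $n'+1>\max_j d_j$ singleton votes. The companion hardness results in this section (for ${\tt Borda}_{\downarrow}$, for ${\tt Borda}_{av}$, and for the candidate-deletion versions) would then follow the same template, using the observation already recorded in the paper that deleting a candidate $c'$ from a vote lowers the scores of exactly the candidates ranked below $c'$.
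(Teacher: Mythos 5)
Your proof is correct and follows essentially the same route as the paper's: a reduction from {\sc Dominating Set} in which adding the unregistered candidate $b_j$ pushes each dominated vertex candidate $a_i$ down one position in a length-two vote, so that $p$ (initially tied with every vertex candidate) becomes the unique winner exactly when the selected vertices form a dominating set, with the parameter $\ell=k$ preserved. The only real difference is cosmetic: you equalize the initial scores with repeated singleton votes for $p$ and the $a_i$'s, whereas the paper uses dummy candidates ($X$ and $Y$) in length-two filler votes; your choice of $n'+1$ singleton $p$-votes also ensures cleanly that an added selector $b_j$, whose score is $d_j(m-1)$, can never catch $p$.
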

\begin{proof}
We prove this theorem for $t$-CCAC-Borda$_{\uparrow}$ by a reduction from {\sc Dominating Set}.
Let~$(\mathcal{G}=(\mathcal{V},\mathcal{E}),k')$, where~$|\mathcal{V}|=n'$, be a
{\sc Dominating Set} instance.
We construct a $t$-CCAC-$\delta$ instance~$(E=(C,C^*,V),\ell)$ as follows.
The candidate set~$C$ consists of four subsets: $C=C_1\cup X\cup Y \cup \{p\}$, where~$C_1$ contains~$n'$
candidates~$C_1=\{c_1^1, c_2^1, \ldots, c_{n'}^1\}$, one-to-one corresponding to the
vertices~$v_i'\in \mathcal{V}$. The set~$X_i$ contains~$t_i=n'-{\tt deg}(v_i)-1$ candidates~$x_i^1,\ldots, x_i^{t_i}$ for each vertex~$v_i\in \mathcal{V}$
with~${\tt deg}(v_i')$ denoting the degree of~$v_i'$ in~$\mathcal{G}$, $X=\bigcup_{1\leq i\leq n'}X_i$.
The set~$Y$ contains~$n'$ candidates~$y_1, \ldots, y_{n'}$.
Finally, the set~$C^*$ contains one candidate~$c_i^2$ for each vertex~$v_i'\in \mathcal{V}$.
Then, there are in total~$N=|C_1|+|X|+|Y|+|C'|+1=n'^2+2n'-2m'+1$ candidates.
The vote set~$V$ consists of three subsets~$V=V_1\cup V_2\cup V_3$.
The set~$V_1$ contains a vote of the form: $c_i^2>c_j^1$ for each~$1\le i\le n'$ and each vertex~$v_j'\in (N(v_i')\cup \{v_i'\})$.
The set~$V_2$ contains $t_i=n'-{\tt deg}(v_i)-1$ votes of the form: $c_i^1>x_i^j$ for each~$v_i'\in \mathcal{V'}$, $1\le j\le t_i$.
The set~$V_3$ contains~$n'$ votes of the form: $p>y_j$ with~$1\le j\le n'$.
We set~$\ell=k'$.
The number of ranked candidates in each vote is at most 2, that is~$t=2$.

Before adding the candidates from~$C^*$, each vote in~$V_1$ contains
only one candidate.
Let~${\tt score}^1(c, V)$ and ${\tt score}^2(c, V)$ denote the score~$c$ received from~$V$ before and after adding candidates in~$C'$ to~$C$ respectively. $N^1=|C_1|+|X|+|Y|+1=n'^2+n'-2m'+1$ be the number of candidates before additions.
Then, we have
\begin{equation*}
\begin{split}
  & {\tt score}^1(p, V)={\tt score}^1(c, V)=(N^1-1)n',\ \forall\ c\in C_1,\\
  & {\tt score}^1(c, V)\le N^1-2,\ \forall\ c\in (C_2\cup C_3).
\end{split}
\end{equation*}
Since adding a candidate can only decrease the value of ${\tt diff}(p,c,V)$ by at most 1, candidates in~$C_2\cup C_3$ cannot be the unique winner of the election.
Now, we prove the equivalence between the instances.

\noindent
``$\Longrightarrow$'': Suppose that~$\mathcal{G}$ has a dominating set~$DS$ of size at most~$k'$.
Let~$C^{**}$ be the set of candidates corresponding to the vertices in~$DS$.
Clearly, $|C^{**}|\le \ell$.
Let $N^2=|C_1|+|X|+|Y|+|C^{**}|+1$ be the number of candidates after additions.
Adding~$C^{**}$ to~$C$ changes only the
votes in~$V_1$ and does not change the score of~$p$,
\begin{equation*}
\begin{split}
  {\tt score}^2(p, V)={\tt score}^1(p,V)=(N^2-1)n'.
\end{split}
\end{equation*}
The addition of a candidate~$c_i^2\in C'$ decreases the score of~$c_i^1$ by exactly one due to the vote~$c_i^2>c_j^1$ where a vertex~$v_j'\in N(v_i')$.
The score of~$c_j^1$ received by this vote is changed to~$N^2-2$.
Since for each vertex~$v_i'\in \mathcal{V}$, we have~$\forall v_i'\in DS\ \exists v_j'\in DS$: $v_i'\in N(v_j')$.
Therefore, the scores of all candidates in~$C_1$ are ${\tt score}^2(c_i^1,V)\leq (N^2-2)+(N^2-1)(n'-1)$.
Thus, $p$ is the unique winner of the election.

\noindent
``$\Longleftarrow$'': Suppose that we can add a set~$C^{**}\subseteq C^*$ of at most~$\ell$ candidates into~$C$ to make~$p$
the winner. Let~$\mathcal{V''}$ be the set of vertices corresponding to the candidates in~$C^{**}$. Again, adding candidates from~$C^*$
to~$C$ only change the the structure of votes in $V_1$.
The score of $p$ is ${\tt score}^2(p,V)=(N^2-1)n'$.
Thus, all candidates $c_i^1$ hold ${\tt score}^2(c_i^1,V)\leq (N^2-2)+(N^2-1)(n'-1)$,
that is, for each candidate~$c_i^1$, there is a vote in~$V_1$ of the form: $c_j^2>c_i^1$, with~$c_j^2\in C^{**}$.
By the construction of~$V_1$, the corresponding vertex~$v_i'$ is in~$\mathcal{V''}$ or in
the neighborhood of~$v_j'$. Then,~$\mathcal{V''}$ is a dominating set. This completes the proof of the equivalence and
the W[2]-hardness of $t$-CCAC-Borda$_{\uparrow}$.
\end{proof}

Similar to the prove in theorem~\ref{thm:top-CCAC-up-w}, 
We just do some modifications to the constructions of votes in the following prove of theorem~\ref{thm:top-CCDC-down-w} and make sure:
1) before candidate deletions, some candidates have the same score as $p$'s;
2) after candidate deletions, the decreased score value of candidates whose score is equal to $p$'s before control operation is greater than the decreased score value of $p$'s.
In this way, $p$ will turn to be the unique winner of the election.

\begin{theorem}
\label{thm:top-CCDC-down-w}
For every constant~$t\ge 2$, $t$-CCDC-${\tt Borda}_{\downarrow}$ is NP-hard and W[2]-hard with respect to the number of deletions $\ell$.
\end{theorem}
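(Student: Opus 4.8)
The plan is to adapt the construction behind Theorem~\ref{thm:top-CCAC-up-w} into a reduction from {\sc Dominating Set}, exploiting the monotone behaviour of ${\tt Borda}_{\downarrow}$ under candidate deletions: deleting a candidate never raises anyone's ${\tt Borda}_{\downarrow}$-score, and in a length-$2$ vote $a>b$, deleting the bottom candidate $b$ lowers ${\tt score}(a)$ by exactly one, whereas deleting the top candidate $a$ leaves ${\tt score}(b)$ untouched. Starting from a {\sc Dominating Set} instance $(\mathcal{G}=(\mathcal{V},\mathcal{E}),k')$ with $|\mathcal{V}|=n'$, I would introduce for every vertex $v_i'$ one ``deletable'' candidate $c_i^1$ and one ``competitor'' candidate $c_i^2$, set $C=\{p\}\cup\{c_i^1\mid 1\le i\le n'\}\cup\{c_i^2\mid 1\le i\le n'\}$ and $\ell=k'$, and build a vote multiset of three kinds, each vote of length at most $2$ (so the instance is simultaneously valid for every constant $t\ge 2$): (i) for each $i$ and each $v_j'\in N[v_i']$, a vote $c_i^2>c_j^1$; (ii) for each $i$, exactly $2(n'-1-{\tt deg}(v_i'))$ single-candidate votes consisting of $c_i^2$ alone; and (iii) $2n'$ single-candidate votes consisting of $p$ alone. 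A routine count gives ${\tt score}(p,V)=2n'$, ${\tt score}(c_i^2,V)=2({\tt deg}(v_i')+1)+2(n'-1-{\tt deg}(v_i'))=2n'$ for every $i$, and ${\tt score}(c_j^1,V)={\tt deg}(v_j')+1\le n'<2n'$; hence $p$ ties every competitor and beats every $c_j^1$, so $p$ is not the unique winner before control, and the reduction is clearly polynomial-time computable.

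For the forward direction, from a dominating set $DS$ of size at most $k'$ I would delete exactly the candidates $\{c_j^1\mid v_j'\in DS\}$: domination of $v_i'$ supplies some $v_j'\in DS\cap N[v_i']$, so the vote $c_i^2>c_j^1$ of kind~(i) loses its bottom candidate and ${\tt score}(c_i^2)$ drops by at least one to at most $2n'-1$, while $p$ occurs only in kind-(iii) votes and keeps score $2n'$; thus $p$ becomes the unique winner. For the backward direction, let $S$ with $|S|\le k'$ turn $p$ into the unique winner; since $p$ cannot be deleted, $S$ splits into $A=\{i\mid c_i^2\in S\}$ and $B=\{j\mid c_j^1\in S\}$ with $|A|+|B|\le k'$. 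Deletions never raise scores, so ${\tt score}(p)$ stays $2n'$, whence every competitor $c_i^2$ with $i\notin A$ must have lost score; the only votes that can lower ${\tt score}(c_i^2)$ without deleting $c_i^2$ itself are the kind-(i) votes $c_i^2>c_j^1$ with $j\in B$, that is, there is $v_j'\in B$ with $v_i'\in N[v_j']$. Therefore $\{v_j'\mid j\in B\}\cup\{v_i'\mid i\in A\}$ is a dominating set of $\mathcal{G}$ of size at most $k'$. Since {\sc Dominating Set} is NP-hard and W[2]-hard with respect to $k'=\ell$, the same holds for $t$-CCDC-${\tt Borda}_{\downarrow}$ for every constant $t\ge 2$.

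I expect the delicate point to be the ``no shortcut'' part of the backward direction: because the chair deletes from the very candidate pool that contains the competitors, one must argue that eliminating a competitor $c_i^2$ directly is never cheaper than covering $v_i'$ by a dominating-set vertex, which is precisely why the witness set takes the form $B\cup A$ and still has size at most $k'$; equivalently, one can first rewrite any solution, deleting each $c_i^2$ in it by the corresponding $c_i^1$ instead, to obtain a solution contained in $\{c_i^1\mid 1\le i\le n'\}$, exactly as in the proof of Theorem~\ref{Complete-CCDC-ell}. A secondary but necessary point is the exact score balancing in kind~(ii): single-candidate padding votes are used rather than two-candidate ones so that deleting a padding vote's candidate cannot by itself lower any competitor's score; here one also checks $2(n'-1-{\tt deg}(v_i'))\ge 0$, which holds since ${\tt deg}(v_i')\le n'-1$.
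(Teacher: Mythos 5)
Your reduction is correct and follows essentially the same route as the paper's proof: a reduction from {\sc Dominating Set} using length-$2$ votes $c_i^2>c_j^1$, exploiting that under ${\tt Borda}_{\downarrow}$ deleting the bottom candidate of a two-candidate vote lowers the top candidate's score by one, with $p$ tied to the competitors beforehand and mixed solutions handled by mapping deleted competitors to their own vertices. The only difference is cosmetic: you balance scores with single-candidate padding votes, whereas the paper uses dummy candidates ($X$, $Y$) and extra two-candidate votes.
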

\begin{proof}
We prove this theorem by a reduction from {\sc dominating set} problem.
Let $(\mathcal{G}=(\mathcal{V},\mathcal{E}),k')$ where $|\mathcal{V}|=n'$ be an instance of dominating set problem.
We construct an instance $(E=(C,V),\ell)$ of $t$-CCDC-${\tt Borda}_{\downarrow}$ according to $(\mathcal{G}=(\mathcal{V},\mathcal{E}),k')$.
For each vertex $v_i'\in \mathcal{V'}$, we construct a candidate $c_i^1$ in $C_1$, $C_1=\{c_1^1,c_2^1,\cdots,c_{n'}^1\}$, a candidate $c_i^2$ in $C_2$, $C_2=\{c_1^2,c_2^2,\cdots,c_{n'}^2\}$, $u_i=n'-{\tt deg}(v_i')-1$ candidates in $X_i$, $X_i=\{x_i^1,x_i^2,\cdots,x_i^{u_i}\}$, $X=\bigcup_{1\leq i\leq n'}X_i$, and one candidate $y_i$ in $Y$, $Y=\{y_1,y_2,\cdots,y_{n'}\}$.
We set $C=C_1\cup C_2\cup X\cup Y\cup \{p\}$.
The vote set $V$ is consist of three subset $V=V_1\cup V_2\cup V_3$.
For each vertex $v_i'\in \mathcal{V'}$, we construct a vote in $V_1$: $c_i^1>c_j^2$ with $v_j'\in N[v_i']$; a vote in $V_2$: $c_i^1>x_i^j$ with $v_j'\notin N[v_{j'}']$; and a vote in $V_3$: $p>y_i$.
Now, the number of ranked candidates in each vote is at most 2, that is $t=2$.
We set $\ell=k'$.

Before deleting candidates, under the rule of ${\tt Borda}_{\downarrow}$, we can get the score of each candidate:\\
(1) Each candidate $c_i^1\in C_1$ receives $2({\tt deg}(v_i')+1)$ points from the votes in $V_1$, ${\tt score}^1(c_i^1,V_1)=2{\tt deg}(v_i')$, receives $2(n'-{\tt deg}(v_i')-1)$ points from $V_2$, ${\tt score}^1(c_i^1,V_2)\\
=2(n'-{\tt deg}(v_i')-1)$, and receives $0$ points from $V_3$, ${\tt score}^1(c_i^1,V_3)=0$.
So, the total score of $c_i^1$ is:
\begin{equation*}
    {\tt score}^1(c_i^1,V) = {\tt score}^1(c_i^1,V_1) + {\tt score}^1(c_i^1,V_2)=2n'.
\end{equation*}
(2) Each candidate $c_i^2\in C_2$ only exists in ${\tt deg}(v_i')+1$ votes in $V_1$, so, the total score of $c_i^2$ is:
\begin{equation*}
    {\tt score}^1(c_i^2,V)={\tt score}^1(c_i^2,V_1)={\tt deg}(v_i')+1.
\end{equation*}
(3) Each candidate in $X\cup Y$ exists once in $V_2\cup V_3$, that is:
\begin{equation*}
    {\tt score}^1(x_i^j,V)={\tt score}^1(y_i,V)=1, 1\leq i\leq n', 1\leq j\leq u_i.
\end{equation*}
(4) The candidate $p$ only exists in the votes of $V_3$, that is: 
\begin{equation*}
    {\tt score}^1(p,V)=2n'.
\end{equation*}
Now, ${\tt score}^1(p,V)={\tt score}^1(c_i^1,V)$, $p$ is not the unique winner.
Next, we prove the equivalence between the two instances.

\noindent
``$\Longrightarrow$'': Suppose that there is size-$\leq k'$ dominating set $DS$ in $\mathcal{G}$.
Let $C^{**}$ be the $C_2$-candidates corresponding to the vertices in $DS$, $C^{**}=\bigcup_{v_i'\in DS}{c_i^2}$.
It is clear that $|C^{**}|=|DS|\leq k'=\ell$.
According to the structure of $V_1$: $c_i^1>c_j^2$ with $v_i'\in N[v_j']$, deleting the candidate $c_j^2$ decreases the scores of some $C_1$ candidates.
Since $DS$ is a dominating set of $\mathcal{G}$, then $\forall v_i^1\in \mathcal{V'}, \exists v_j'\in DS, v_i'\in N(v_j')$.
So the corresponding set $C^{**}$ satisfies $\forall c_i^1\in C_1, \exists c_j^2\in C^{**}, (c_i^1>c_j^2)\in V_1$.
This means the score of candidate $c_i^1\in C_1$ is decreased by at least 1 by deleting the candidates in $C^{**}$, ${\tt score}^2(c_i^1,V)\leq {\tt score}^1(c_i^1,V)-1=2n'-1$.
And, deleting the candidates in $C^{**}$ does not change the score of any other candidates, that is ${\tt score}^2(p,V)={\tt score}^1(p,V)=2n'>{\tt score}^2(c_i^1,V)$.
In this way, $p$ is the unique winner and $C^{**}$ is a solution of this election.

\noindent
``$\Longleftarrow$'': Suppose that there is solution $C^{**}$ of $(E=(C,V),\ell)$ with $|C^{**}|\leq \ell$ that deleting the candidates in $C^{**}$ can make $p$ be the unique winner of the election.
So, ${\tt score}^2(p,V)>{\tt score}^2(c_i^1,V)$.
In addition, since ${\tt score}^1(p,V)={\tt score}^1(c_i^1,V)=2n'$, the score of candidate $c_i^1\in C_1$ is decreased by at least 1 or $c_i^1$ is in $C^{**}$.
According to the structure of $V_1$: $c_i^1>c_j^2$, there must be a vote where the candidate $c_j^2$ or $c_i^1$ is in $C^{**}$.
It means $\forall c_i^1\in C_1\ \exists c_j^2\in C^{**}: (c_i^1>c_j^2)\in V_1$ or $c_i^1\in C^{**}$.
The candidates in $C_1$ and $C_2$ correspond to the vertex in $\mathcal{V}$.
Let $\mathcal{V''}$ be the set of vertices corresponding to the candidates in $C^{**}$.
It is clear that $|\mathcal{V''}|=|C^{**}|\leq \ell=k'$ and $\forall v_i'\in \mathcal{V'}, \exists v_j'\in \mathcal{V''}, v_i'\in N(v_j')$.
Therefore, $\mathcal{V''}$ is a size-$\leq k'$ dominating set of $\mathcal{G}$.
\end{proof}

Next, we continue analyze the complexity of $t$-CCDC-${\tt Borda}_{\uparrow}$ and $t$-CCAC-${\tt Borda}_{\uparrow}$.
For $t$-CCDC-${\tt Borda}_{\uparrow}$ problem, deleting candidate $c$ can increase the scores of candidates behind the position of $c$ corresponding to the other candidates in a vote (the scores of candidates behind $c$ are unchanged while the scores of the other candidates will be decreased).
Similarly, for $t$-CCAC-${\tt Borda}_{\downarrow}$ problem, adding candidate $c$ can increase the scores of candidates before the position of $c$ corresponding to the other candidates (the scores of the candidate before $c$ will increase while the scores of the other candidates will remain unchanged).
The following two examples show the scores of each candidates before and after the control operations.

\vbox{}
{\bf Example 6:}Given an instance of $3$-CCAC-${\tt Borda}_{\downarrow}$ $(E=(C, C^*, V), \ell)$ where $C=\{c_1, p\}$, $C^*=\{c_2, c_3\}$, $V=\{v_1, v_2\}$, $\ell=1$, and $v_1: p>c_3>c_1$, $v_2: c_2>c_1>p$.
Before adding candidates, $v_1$ is $p>c_1$ and $v_2$ is $c_1>p$.
The score of $p$'s is equal to $c_1$'s, ${\tt score}(p, V)={\tt score}(c_1, p)=1$.
If we add the candidate of $c_2$, vote $v_2$ turns to be $c_2>c_1<p$, it holds that ${\tt score}(p, V)={\tt score}(c_1, p)=3$;
If we add the candidate of $c_3$, vote $v_1$ turns to be $p>c_3>c_1$, the score of $c_1$ is 3 and the score of $p$ is 4.
Therefore, adding $c_3$ can make $p$ being the unique winner of the election.
$\{c_3\}$ is a solution of $3$-CCAC-${\tt Borda}_{\downarrow}$ $(E=(C, C^*, V), \ell)$.

\vbox{}
{\bf Example 7:}Given an instance of $3$-CCDC-${\tt Borda}_{\uparrow}$ $(E=(C, V), \ell)$ where $C=\{c_1, c_2, c_3, p\}$, $V=\{v_1, v_2\}$, $\ell=1$, and $v_1: c_1>c_3>p$, $v_2: p>c_2>c_1$.
Before deleting candidates, the score of $p$'s is equal to $c_1$'s, ${\tt score}(p, V)={\tt score}(c_1, p)=4$.
If we delete the candidate of $c_2$, the score of $p$'s will decrease by 1 corresponding to $c_1$, ${\tt diff}(p, c_1, V)=3-4=-1$;
If we delete the candidate of $c_3$, the score of $c_1$'s will decrease by 1 corresponding to $p$, ${\tt diff}(p, c_1, V)=4-3=1$.
Therefore, deleting $c_3$ can make $p$ being the unique winner of the election.
$\{c_3\}$ is a solution of $3$-CCDC-${\tt Borda}_{\downarrow}$ $(E=(C, V), \ell)$.

\vbox{}
In the following, we present the details of the hardness proves of $t$-CCDC-${\tt Borda}_{\uparrow}$ and $t$-CCAC-${\tt Borda}_{\uparrow}$ problems with $t\geq 2$.

\begin{theorem}
\label{thm:top-CCDC-up-w}
For every constant~$t\ge 2$, $t$-CCDC-${\tt Borda}_{\uparrow}$ is NP-hard and W[2]-hard with respect to the number of deletions $\ell$.
\end{theorem}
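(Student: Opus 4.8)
The plan is to reduce from {\sc Dominating Set}, constructing an instance in which every vote has length exactly~$2$; such a vote is a valid $t$-truncated vote for every $t\ge 2$, so one reduction establishes the claim for all constant $t\ge 2$ simultaneously. The structural fact to exploit is that, under ${\tt Borda}_{\uparrow}$, deleting a set~$D$ of candidates shrinks the scoring vector: every surviving candidate~$a$ loses one point for each vote it is ranked in, \emph{except} that if~$a$ is ranked second in a vote whose first-ranked candidate belongs to~$D$, that point is ``protected''. Hence the total loss of~$a\neq p$ equals~$|D|\cdot r(a)-q_D(a)$, where~$r(a)$ is the number of votes ranking~$a$ and~$q_D(a)$ counts the votes in which~$a$ is second with its predecessor in~$D$. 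Consequently the relative standing of~$p$ and a rival can be altered only through the protection terms~$q_D$; this is exactly why the ${\tt Borda}_{\downarrow}$-style construction of Theorem~\ref{thm:top-CCDC-down-w} (where a deleted candidate simply drains points from whoever precedes it) does not transfer, and why the dominating-set structure must be routed entirely through shared first-ranked partners.

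Given~$(\mathcal{G}=(\mathcal{V},\mathcal{E}),k')$ with~$|\mathcal{V}|=n'$, and after the routine preprocessing removing isolated vertices (which lie in every dominating set), I would create candidates~$C_1=\{c_1^1,\dots,c_{n'}^1\}$ (the rivals of~$p$, one per vertex),~$C_2=\{c_1^2,\dots,c_{n'}^2\}$ (the selectors), padding dummies~$X=\bigcup_i X_i$ with~$|X_i|={\tt deg}(v_i')+1$, and~$p$; set~$C=C_1\cup C_2\cup X\cup\{p\}$ and~$\ell=k'$. The length-$2$ votes are: $c_j^2>p$ for every~$j$; $c_j^2>c_i^1$ for every~$i$ and every~$v_j'\notin N[v_i']$; and~${\tt deg}(v_i')+1$ padding votes~$x_i^t>c_i^1$ for each~$i$. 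Then~$p$ and every~$c_i^1$ are ranked exactly~$n'$ times, always in second position, hence all tied with score~$(|C|-2)n'$, while every other candidate is always ranked first but in strictly fewer than~$n'$ votes; since~$|C|=\Theta(n'+|\mathcal{E}|)$ is already large, those candidates stay strictly below~$p$, so~$p$ is not the unique winner before control.

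For the forward direction, from a dominating set~$T$ with~$|T|\le k'$ I delete~$D=\{c_j^2:v_j'\in T\}$. By the loss formula and~$r(p)=r(c_i^1)=n'$, the advantage of~$p$ over~$c_i^1$ grows by exactly~$q_D(p)-q_D(c_i^1)=|T|-|T\setminus N[v_i']|=|T\cap N[v_i']|\ge 1$, so~$p$ strictly beats~$c_i^1$; a routine estimate (using~${\tt deg}(v_j')\ge 1$ and~$|C|$ large) shows the remaining selectors and the dummies fall even further behind. Hence~$p$ becomes the unique winner. For the converse, let~$D$ be any solution with~$|D|\le\ell$. First, an exchange argument shows that deleting a dummy~$x_i^t$ never helps: it only protects the rival~$c_i^1$, and removing~$x_i^t$ from~$D$ leaves a solution (no difference~${\tt diff}(p,c)$ decreases, and the re-inserted~$x_i^t$ still cannot beat~$p$); so I may assume~$D\subseteq C_1\cup C_2$. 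Writing~$T=\{v_j':c_j^2\in D\}$ and~$S=\{v_i':c_i^1\in D\}$, so~$|T|+|S|\le\ell$, the loss formula applied to each surviving rival~$c_i^1$ (i.e., $v_i'\notin S$) shows~$p$ can beat~$c_i^1$ only if~$|T\cap N[v_i']|\ge 1$; thus~$T$ dominates every vertex outside~$S$, and~$T\cup S$ is a dominating set of size at most~$k'$. Since the reduction is polynomial and the parameter equals~$k'$, NP-hardness and W[2]-hardness follow.

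The step I expect to be the main obstacle is the bookkeeping imposed by the uniform score shift: one must make~$p$ and every rival occur the same number of times in the same position so that the shift cancels exactly, encode the graph solely through which first-ranked partners are shared, and then carefully exclude the ``parasitic'' solutions that delete rivals in~$C_1$ or dummies in~$X$ (handled by the exchange arguments), all while re-checking that the selectors~$C_2$ and the dummies~$X$ can never overtake~$p$ once the deletions have been made.
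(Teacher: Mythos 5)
Your reduction is correct and follows essentially the same route as the paper's: deleting a selector candidate $c_j^2$ under ${\tt Borda}_{\uparrow}$ protects $p$ in the vote $c_j^2>p$ and protects a rival $c_i^1$ only in the non-neighbor votes $c_j^2>c_i^1$, so $p$ overtakes $c_i^1$ exactly when the chosen vertices dominate $v_i$, and parasitic deletions of rivals or dummies are handled by the same kind of exchange/absorption argument the paper uses. The only differences are bookkeeping: the paper balances scores by reducing from $D$-regular Dominating Set and adding extra vote families ($c_i^1>c_j^2$ for neighbors, dummy votes $y_i^j>c_i^1$, and singleton votes for $p$), whereas you balance with degree-dependent dummy votes so that $p$ and every rival appear exactly $n'$ times in second position.
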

\begin{proof}
We prove this theorem by a reduction from $D$-regular graph {\sc dominating set} problem.
Let $(\mathcal{G}=(\mathcal{V, \mathcal{E}}), k')$ where $|\mathcal{V}|=n'$ be an instance of $D$-regular graph dominating set problem.
The degree of each vertex is $D$ in $D$-regular graph.
Then we construct an instance $(E=(C,V),\ell)$ of $t$-CCDC-${\tt Borda}_{\uparrow}$ problem according to $(\mathcal{G}=(\mathcal{V', \mathcal{E'}}), k')$.
For each vertex $v_i'\in \mathcal{V'}$, we construct a candidate $c_i^1$ in $C_1$, that is $C_1=\{c_1^1, c_2^1, \cdots, c_{n'}^1\}$, a candidate $c_i^2$ in $C_2$, that is $C_2=\{c_1^2, c_2^2, \cdots, c_{n'}^2\}$, and $D+1$ candidates in $Y$, that is $Y_i=\{y_i^1,y_i^2,\cdots, y_i^{D+1}\}$, $Y=\bigcup_{1\leq i\leq n'}Y_i$.
We set $C=C_1\cup C_2 \cup Y\cup \{p\}$, $N^1=|C_1|+|C_2|+|Y|+|\{p\}|=(D+3)*n+1$.
In the following, we construct the set of votes $V$.
For each vertex $v_i'$, we construct a vote in $V_1$ of the form: $c_i^1>c_j^2$ with  $v_j'\in N[v_i']$; a vote in $V_2$ of the form: $c_{j'}^2>c_i^1$ with  $v_{j'}'\notin N[v_i']$; $D+3$ votes in $V_3$ of the form: $y_i^j>c_i^1$ with $1\leq j\leq D+1$, and a vote in $V_4$ of the form: $c_i^2>p$.
In addition, we construct $n'$ identical votes in $V_5$ of the form: $p$.
We set $V=V_1\cup V_2 \cup V_3 \cup V_4 \cup V_5$ and $\ell=k'$.
The number of ranked candidates in each vote is at most 2, that is $t=2$.

Before deleting candidates, each candidate $c_i^1 \in C_1$ receives $(N^1+1)(D+1)$ points from $V_1$ and $(N^1-1)n'$ points from $V_2\cup V_3$, that is ${\tt score}^1(c_i^1, V)=(N^1-1)\times 2n'$.
Each candidate $c_i^2\in C_2$ receives $(N^1-2)(D+1)$ points from $V_1$, $(N^1-1)(n'-D-1)$ points from $V_2$, and $(N^1-1)n'$ points from $V_4$, that is ${\tt score}^1(c_i^2,V)=(N^1-2)(D+1)+(N^1-1)(n'-D-1)+(N^1-1)n'$.
Each candidate in $Y$ only exists once in $V_3$, ${\tt score}^1(y_i,V)=N^1-1$.
Candidate $p$ receives $(N^1-1)n'$ points from $V_4$ and $(N^1-1)n'$ points from $V_5$, that is ${\tt score}^1(p,V)={\tt score}^1(p,V_4)+{\tt score}^1(p,V_5)=(N^1-1)\times 2n'$.
Now, ${\tt score}^1(p,V)={\tt score}^1(c_i^1,V)$, $p$ is not the unique winner.

According to the construction of $V$, the value of ${\tt diff}(p,c_i^1,V)$ is increased only when some $C_2$-candidates are deleted.
Deleting the candidates in $Y$ only increases the scores of some $C_1$ candidates.
Therefore, if $C^{**}$ is a solution of $(E=(C,V),\ell)$ containing a candidate $y_i^j\in Y$, $C^{**}\setminus{\{y_i^j\}}$ is also a solution of $(E=(C,V),\ell)$.
In this way, each solution of $(E=(C,V),\ell)$ can be transformed to another solution with only candidates in $C_1\cup C_2$.
So, we just analyze the conditions of eliminating candidates in $C_1\cup C_2$ here.
In the following, we prove the equivalence between $(\mathcal{G}=(\mathcal{V, \mathcal{E}}), k')$ and $t$-CCDC-${\tt Borda}_{\uparrow}$.

\noindent
``$\Longrightarrow$'': Suppose that there is a size-$\leq k'$ dominating set $DS$.
Let $C^{**}$ be the set of $C_2$-candidates corresponding to the vertices in $DS$, $C^{**}=\bigcup_{v_i'\in DS}\{c_i^2\}$.
It is obviously that $|C^{**}|=|DS|\leq k'=\ell$.
According to the construction of $V_4$: $c_i^2>p$, deleting the candidates in $C^{**}$ can increase the score of $p$ by $|C^{**}|$ corresponding to the other candidates. 
And, according to the structure of $V_2$: $c_j^2>c_i^1,  v_i'\notin N[v_j']$, deleting a candidate $c_j^2$ of $C_2$ will increase the score of some $C_1$ candidates corresponding to the other candidates.
Similarly, according to the structure of $V_1$: $c_i^1>c_j^2, v_i'\in N[v_j']$, deleting any candidate $c_j^2$ of $C_2$ cannot increase the score of any $C_1$ candidate.
Since $DS$ is a dominating set of $\mathcal{G}$, that is $\forall v_i'\in \mathcal{V'}, \exists v_j'\in DS, v_i'\in N[v_j']$.
So, the corresponding candidates of $C^{**}$ hold: $\forall c_i^1\in C_1, \exists c_j^2\in C^{**}: (c_i^1>c_j^2)\in V_1$.
It means the score of each $C_1$ candidate is at most $2n'(N^2-1)+|C^{**}|-1$ points after deleting the candidates in $C^{**}$, that is ${\tt score}^2(c_i^1,p)\leq 2n'(N^2-1)+|C^{**}|-1$.
Since, it always be ${\tt score}^2(p,V)=2n'(N^2-1)+|C^{**}|$.
It holds ${\tt diff}(p,c_i^1,V)\geq 1$.
In this way, $p$ is the unique winner of the election.

\noindent
``$\Longleftarrow$'': Suppose there is a solution $C^{**}$ of the election $(E=(C,V),\ell)$ with $|C^{**}|\leq \ell$ that deleting the candidates in $C^{**}$ can make $p$ be the unique winner.
So, there must be ${\tt score}^2(p,V)>{\tt score}^2(c_i^1,V)$.
It means the decreased score value of $p$ is smaller than the one of each candidate $c_i^1\in C_1$.
Since each solution $C^{**}$ of $(E=(C,V),\ell)$ can transformed to another solution only containing the candidates in $C_1\cup C_2$.
So, $C^{**}$ only contains the candidates in $C_1\cup C_2$.
According to the structure of $V_1$: $c_i^1>c_j^2$, the candidate $c_j^2$ or $c_i^1$ must be in $C^{**}$, that is $\forall c_i^1\in C_1, \exists c_j^2\in C^{**}, (c_i^1>c_j^2)\in V_1$ or $c_i^1\in C^{**}$.
The candidates of $C_1$ and $C_2$ are corresponding to the vertices in $\mathcal{V}$.
Let $\mathcal{V''}$ be the vertices corresponding to $C^{**}$.
It is obvious that $|\mathcal{V''}|=|C^{**}|\leq \ell=k'$ and $\forall v_i'\in \mathcal{V}, \exists v_j'\in \mathcal{V''}, v_i'\in N[v_j']$.
Therefore, $\mathcal{V''}$ is a size-$\leq k'$ dominating set of $\mathcal{G}$.
\end{proof}

Similar to the prove in theorem~\ref{thm:top-CCDC-up-w}, 
We just do some modifications to the constructions of votes in the following prove of theorem~\ref{thm:top-CCAC-down-w} and make sure:
1) before candidate deletions, some candidates have the same score as $p$'s;
2) after candidate deletions, the decreased score value of candidates whose score is equal to $p$'s before control operation is greater than the decreased score value of $p$'s.
In this way, $p$ will turn to be the unique winner of the election.

\begin{theorem}
\label{thm:top-CCAC-down-w}
For every constant~$t\ge 2$, $t$-CCAC-${\tt Borda}_{\downarrow}$ is NP-hard and W[2]-hard with respect to the number of additions $\ell$.
\end{theorem}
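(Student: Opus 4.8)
The plan is to reduce from {\sc Dominating Set} (restricted to $D$-regular graphs, if that is needed to keep the arithmetic uniform), mirroring the reduction behind Theorem~\ref{thm:top-CCDC-up-w} but exploiting the observation recorded just above: under ${\tt Borda}_{\downarrow}$, inserting a candidate~$c$ into a vote raises by exactly one the score of every candidate currently ranked \emph{ahead} of~$c$ and leaves every other candidate's score unchanged. So I would design the votes so that registering the candidate~$c_j^2$ — which encodes the choice of putting vertex~$v_j'$ into the dominating set — gives~$p$ a fixed gain of one, while giving a rival~$c_i^1$ a gain of one \emph{only} when~$v_i'$ is not dominated by~$v_j'$. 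A set of additions then lets~$p$ overtake every rival exactly when the chosen vertices dominate~$\mathcal{G}$.

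Concretely, take $C=C_1\cup X\cup Y\cup\{p\}$, where $C_1=\{c_1^1,\dots,c_{n'}^1\}$ contains one rival per vertex and $X,Y$ are bundles of private dummy candidates; and $C^{*}=C_2=\{c_1^2,\dots,c_{n'}^2\}$ contains one addable candidate per vertex. The votes over $C\cup C^{*}$, all of length at most~$2$, come in three blocks. Block~(i): for each~$j$ a vote $p>c_j^2$, so that registering~$c_j^2$ turns the vote ``$p$'' (of length~$1$) into ``$p>c_j^2$'' and hands~$p$ exactly one extra point. Block~(ii): for each ordered pair $(i,j)$ a vote $c_i^1>c_j^2$ if $v_i'\notin N[v_j']$ and a vote $c_j^2>c_i^1$ if $v_i'\in N[v_j']$; thus registering $c_j^2$ boosts $c_i^1$ by one in the first case and by zero in the second, so the total boost of $c_i^1$ from registering a set $C^{**}\subseteq C^{*}$ is $|\{j:c_j^2\in C^{**}\}|-|\{j:c_j^2\in C^{**},\,v_i'\in N[v_j']\}|$. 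Block~(iii): padding votes using $X$, $Y$ (or, equivalently, enough extra copies of the singleton votes ``$p$'' and ``$c_i^1$''), tuned so that before any addition $p$ and every $c_i^1$ have the same score while, after any admissible registration, no dummy and no $c_j^2$ can reach $p$'s score; since a registered $c_j^2$ sits in position~$1$ of the block-(ii) votes $c_j^2>c_i^1$ for all $v_i'\in N[v_j']$, its score jumps on registration, and block~(iii) must be large enough to absorb this. Because all votes have length at most~$2$ and ${\tt Borda}_{\downarrow}$-scores depend only on vote lengths (not on $m$ or $t$), the very same instance is a valid $t$-truncated instance for every constant $t\ge2$, so the reduction is uniform in~$t$. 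Finally set $\ell=k'$.

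For the equivalence: if $\mathcal{G}$ has a dominating set $DS$ with $|DS|\le k'$, registering $\{c_j^2:v_j'\in DS\}$ gives $p$ a gain of $|DS|$ while every $c_i^1$ gains $|DS|-|\{v_j'\in DS:v_i'\in N[v_j']\}|\le|DS|-1$, so $p$ strictly beats every rival, and block~(iii) keeps everyone else below $p$; hence $p$ is the unique winner. Conversely, if registering some $C^{**}\subseteq C^{*}$ with $|C^{**}|\le\ell$ makes $p$ the unique winner, then, since block~(iii) forbids any candidate of $X\cup Y\cup C_2$ from winning, $p$ must beat every $c_i^1$, which by the count above forces $|\{j:c_j^2\in C^{**},\,v_i'\in N[v_j']\}|\ge1$ for every~$i$; thus $\{v_j':c_j^2\in C^{**}\}$ is a dominating set of size $\le k'$. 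The reduction runs in polynomial time and $\ell=k'$, so W[2]-hardness with respect to~$\ell$ and NP-hardness both follow, for every constant $t\ge2$.

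The delicate part is block~(iii): one must simultaneously equalize the initial scores of $p$ and all $n'$ rivals and certify that no dummy and no freshly registered $c_j^2$ can overtake $p$ under any set of at most $\ell$ additions — the latter being where the sizes of $X$ and $Y$ (and, if convenient, a $D$-regular host graph) are pinned down. Everything else is a routine transcription of the dominating-set argument, essentially dual to the one used for Theorem~\ref{thm:top-CCDC-up-w}.
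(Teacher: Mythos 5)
Your reduction is essentially the paper's: the paper also reduces from {\sc Dominating Set} using only length-$2$ votes, with a block of votes $p>c_i^2$ giving $p$ exactly $+1$ per registered candidate, a block $c_i^1>c_j^2$ for \emph{non}-neighbours giving a rival $+1$ exactly when the chosen vertex fails to dominate it, and initial scores of $p$ and all rivals equalized at $n'$; your forward/backward counting argument matches the paper's. The one place you diverge is the padding: the paper pads each rival with ${\tt deg}(v_i')+1$ singleton votes ``$c_i^1$'' and creates \emph{no} vote in which an unregistered candidate is ranked above anyone, so a registered $c_j^2$ ends with score $n'-{\tt deg}(v_j')<n'$ and can never threaten $p$ — your ``block (iii)'' problem simply does not arise. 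By instead padding with votes $c_j^2>c_i^1$ for neighbours, you let a registered $c_j^2$ jump to roughly $n'+{\tt deg}(v_j')+2$, which can exceed $p$'s $n'+|C^{**}|$, and you defer the repair to an unspecified block (iii). That deferred step is genuinely needed but routine to fill: since the jump is at most $n'+1$, adding, say, $2n'$ singleton votes ``$p$'' together with $2n'$ singleton votes ``$c_i^1$'' for every $i$ preserves the $p$-versus-$C_1$ tie and the gain analysis while keeping every $c_j^2$ (and any dummies, which you can in fact drop entirely) strictly below $p$. So the proposal is correct in substance and completable, but the paper's choice of votes eliminates exactly the delicate calibration you left open; if you keep your variant, spell out the padding constants and the bound on ${\tt score}(c_j^2)$ explicitly.
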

\begin{proof}
We prove this theorem by a reduction from {\sc dominating set} problem.
Let $(\mathcal{G}=(\mathcal{V},\mathcal{E}),k')$ be an instance of {\sc dominating set} problem where $|\mathcal{V}|=n'$.
We construct a $t$-CCAC-${\tt Borda}_{\downarrow}$ instance $(E=(C,C^*,V),\ell)$ according to $(\mathcal{G}=(\mathcal{V},\mathcal{E}),k')$.
For each vertex $v_i'\in \mathcal{V}$, we construct a candidate $c_i^1$ in $C_1$, $C_1=\{c_1^1,c_2^1,\cdots,c_{n'}^1\}$, and a candidate $c_i^2$ in $C_2$, $C_2=\{c_1^2,c_2^2,\cdots,c_{n'}^2\}$.
Let $C=C_1\cup \{p\}$ and $C^*=C_2$.
The vote set $V$ is consist of three component $V=V_1\cup V_2 \cup V_3$.
For each vertex $v_i'\in \mathcal{V}$, we construct a vote in $V_1$ of the form: $c_i^1>c_j^2$ with  $v_j'\notin N[v_i']$; ${\tt deg}(v_i')+1$ votes in $V_2$ of the form: $c_i^1$, and a vote in $V_3$ of the form: $p>c_i^2$.
The number of ranked candidates in each vote is at most 2, that is, $t=2$.
We set $\ell:=k'$.

Before adding candidates in $C^*$, each vote $v\in V_1$ ranks only one candidate: $c_i^1$, and each candidate $c_i^1$ is in exact $n'-{\tt deg}(v_i')-1$ votes, ${\tt score}^1(c_i^1,V_1)=n'-{\tt deg}(v_i')-1$.
Similarly, each vote $v\in V_2$ also rankss only one candidate: $c_i^1$, and each candidate $c_i^1$ is in ${\tt deg}(v_i')+1$ votes, ${\tt score}^1(c_i^1,V_2)={\tt deg}(v_i')+1$.
Each $V_3$ vote ranks a candidate $p$, ${\tt score}^1(p,V_3)=n'$.
So, before adding candidates in $C^*$, the score of each candidates are as follows:
\begin{equation*}
\begin{split}
{\tt score}^1(c_i^1,V)&={\tt score}^1(c_i^1,V_1)+{\tt score}^1(c_i^1,V_2)\\
&=n'-{\tt deg}(v_i')-1+{\tt deg}(v_i')+1=n';\\
{\tt score}^1(p, V)&={\tt score}^1(p,V_3)=n'.
\end{split}
\end{equation*}
So, candidate $p$ is not the unique winner of the election.
Under the rule of $Borda_{\downarrow}$, adding candidates can only increase the score of some candidates.
In addition, the candidates of $C^*$ only exist in $V_1\cup V_3$.
Therefore, we only analyze the votes in $V_1\cup V_3$ after adding candidates.
In the following, we prove the equivalence between the two problems.

\noindent
``$\Longrightarrow$'': Suppose that there is a dominating set $DS$ of $\mathcal{G}$ with $|DS|\leq k'$.
Let $C^{**}$ be the set of candidates in $C_2$ corresponding to the vertices in $DS$, that is $C^{**}=\bigcup_{v_i'\in DS}\{c_i^2\}$.
It is clear that $|C^{**}|=|DS|\leq k'=\ell$.
According to the structure of $V_3$: $p>c_i^2$, adding $|C^{**}|$ candidates can increase the score of $p$ by $|C^{**}|$, ${\tt score}^2(p,V)={\tt score}^1(p,V)+|C^{**}|=n'+|C^{**}|$.
Similarly, according to the structure of $V_1$: $c_i^1>c_j^2$ with $v_j'\notin N[v_i']$, adding the candidate $c_j^2$ can increase the score of $c_i^1$ by 1.
Since $DS$ is a dominating set of $\mathcal{G}$, then $\forall v_i'\in \mathcal{V}, \exists v_j'\in DS, v_i'\in N[v_j']$.
The candidates in $C^*$ correspond to the vertices in $\mathcal{V}$.
It satisfies $\forall c_i^1\in C_1, \exists c_j^2\in C^{**}, (c_i^1>c_j^2)\notin V_1$.
This means the score of each candidate $c_i^1$ is increased by at most $|C^{**}|-1$, that is ${\tt score}^2(c_i^1,V)\leq {\tt score}^1(c_i^1,V)+|C^{**}|-1=n'+|C^{**}|-1$.
The score of each $C_2$ candidate is ${\tt score}^2(c_j^2,V)={\tt score}^2(c_j^2,V_1)+{\tt score}^2(c_j^2,V_3)=n'-{\tt deg}(v_i')-1+1<n'$.
Therefore, after adding the candidates in $C^{**}$, it holds ${\tt diff}(p,c_j^2,V)>{\tt diff}(p,c_i^1,V)>0$.
$C^{**}$ is a solution of $(E=(C,C^{*},V),\ell)$.

\noindent
``$\Longleftarrow$'': Suppose that there is no size-$\leq k'$ dominating set in $\mathcal{G}$.
So for each vertex set $\mathcal{V''}$ with $|\mathcal{V''}|\leq k'$, there must be a candidate $v'$ not dominated by the vertex in $\mathcal{V''}$.
That is $\exists v_i'\in \mathcal{V}, \forall v_j'\in \mathcal{V''}, v_i'\notin N[v_j']$.
Let $C^{**}$ be the set of candidates corresponding to the vertices in $\mathcal{V''}$.
According to the structure of $V_1$: $c_i^1>c_j^2$ with $v_j'\notin N[v_i']$, there must be a candidate $c_i^1$ which is in a vote of $V_1$ together with a candidate $c_j^2$ in $C^{**}$.
It means $\exists c_i^1\in C_1, \forall c_j^2\in C^{**}, (c_i^1>c_j^2)\in V_1$.
So, there must be a candidate whose score is increased by $|C^{**}|$.
Since the score of $p$ is also increase by $|C^{**}|$ when adding all candidates in $C^{**}$.
In this way, there must be a candidate $c_i^1$ satisfying ${\tt score}^2(p,V)={\tt score}^2(c_i^1,V)=n'+|C^{**}|$.
Therefore, the election $(E=(C,C^*,V),\ell)$ instance does not have a subset $C^{**}$ with $|C^{**}|\leq \ell$, and eliminating the candidates in $C^{**}$ can make $p$ being the unique winner of the election.
\end{proof}

For the control problems under the rule of ${\tt Borda}_{av}$, the score value changes of candidates after the control operations are the same as under the rule of ${\tt Borda}_{\uparrow}$ (adding/del-
eting candidates can only increase/decrease the scores of some candidates and the scores of the other candidates remain unchanged).
Let ${\tt diff}_{av}(c,c',V)$ be the value of ${\tt diff}(c,c',V)$ under the rule of ${\tt Borda}_{av}$, ${\tt diff}_{\uparrow}(c,c',V)$ be the value of ${\tt diff}(c,c',V)$ under the rule of ${\tt Borda}_{\uparrow}$.
For any both ranked (or both unranked) candidates $c$ and $c'$ of a vote $v$, the value of ${\tt diff}(c,c',\{v\})$ is the same under the rule of ${\tt Borda}_{\uparrow}$ and ${\tt Borda}_{av}$, that is, ${\tt diff}_{\uparrow}(c,c',\{v\})={\tt diff}_{av}(c,c',\{v\})$; and for a ranked candidate $c$ and an unranked candidate $c'$ of a vote $v$, the value of ${\tt diff}_{av}(c,c',\{v\})-{\tt diff}_{\uparrow}(c,c',\{v\})$ is a constant, that is, ${\tt diff}_{av}(c,c',\{v\})-{\tt diff}_{\uparrow}(c,c',\{v\})=\frac{m-|v|-1}{2}$.
Thus, by adding some dummy candidates, we can make sure the values of ${\tt diff}_{\uparrow}(c,c',\\
\{v\})$ and ${\tt diff}_{av}(c,c',\{v\})$ are always the same under the rules of ${\tt Borda}_{\uparrow}$ and ${\tt Borda}_{av}$.
We can prove the following theorem.

\begin{theorem}
\label{top-CCAC-CCDC-av-w}
For every constant~$t\ge 2$, $t$-CCAC-${\tt Borda}_{av}$ and $t$-CCDC-${\tt Borda}_{av}$ are NP-hard and W[2]-hard with respect to the number of operations $\ell$.
\end{theorem}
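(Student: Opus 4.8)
The plan is to reduce from {\sc Dominating Set} along the lines of the proofs of Theorem~\ref{thm:top-CCAC-up-w} (for $t$-CCAC-${\tt Borda}_{av}$) and Theorem~\ref{thm:top-CCDC-up-w} (for $t$-CCDC-${\tt Borda}_{av}$), exploiting the facts recorded just before the theorem: under ${\tt Borda}_{av}$ a candidate addition can only raise and a candidate deletion can only lower the scores of the relevant candidates, exactly as under ${\tt Borda}_{\uparrow}$, and for any vote~$v$ and any two candidates that are both ranked in~$v$ or both unranked in~$v$, the quantity ${\tt diff}(\cdot,\cdot,\{v\})$ has the same value under the two rules because the number~$N$ of registered candidates cancels. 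The only difference is contributed by ``mixed'' votes, where one of the two candidates is ranked and the other is not: such a vote contributes $\pm\frac{N-|v|-1}{2}$ more under ${\tt Borda}_{av}$, with the sign determined by which of the two is the ranked one. So I would keep the combinatorial skeleton of the ${\tt Borda}_{\uparrow}$ reductions intact --- the same candidate groups $C_1, C_2$ and their auxiliary candidates, the same three vote blocks, $t=2$, $\ell=k'$ --- and only repair the effect of the averaging bonus on the scores.

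The repair uses additional dummy candidates together with dummy votes in which the relevant ``real'' candidate is ranked opposite a dummy. The point is that the ${\tt Borda}_{av}$ bonus perturbs $p$'s margin over each competitor $c_i^1$ by an amount determined by how often $p$, respectively $c_i^1$, is ranked and by the lengths of the mixed votes; in the ${\tt Borda}_{\uparrow}$ construction this amount is nonzero, so the tie between $p$ and $c_i^1$ that the ${\tt Borda}_{\uparrow}$ analysis relied on is destroyed. I would restore each such tie by inserting a small, carefully chosen set of dummy votes whose net ${\tt diff}_{av}$-contribution equals the required correction, filling their remaining positions with fresh dummy candidates; working with $D$-regular {\sc Dominating Set} instances (as in Theorem~\ref{thm:top-CCDC-up-w}) makes the correction the same for every competitor and so simplifies this step, and scaling every multiplicity in the construction by a small constant keeps all corrections integral. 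The same dummy padding is also used to push the candidates that are strictly below $p$ in the ${\tt Borda}_{\uparrow}$ construction far enough below $p$ that the bounded shift caused by the bonus --- and the bounded drift of the bonus during the operations --- cannot lift them to a win. After these adjustments, ${\tt diff}_{av}(p,c,V)$ agrees with ${\tt diff}_{\uparrow}(p,c,V)$ for every candidate~$c$ that matters, the per-operation effect on ${\tt diff}(p,c,V)$ is the same under both rules, and therefore a family of at most~$\ell$ operations makes $p$ the unique ${\tt Borda}_{av}$-winner exactly when it makes $p$ the unique ${\tt Borda}_{\uparrow}$-winner; the equivalence with the {\sc Dominating Set} instance, and hence NP-hardness together with W[2]-hardness in~$\ell$, then follow from Theorems~\ref{thm:top-CCAC-up-w} and~\ref{thm:top-CCDC-up-w}.

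I expect the main obstacle to be exactly this dummy bookkeeping. For $t=2$ a vote carries at most two ranked candidates and, in the addition case, an added candidate already consumes that budget, so one cannot force all mixed votes to have a common length; hence the bonus really does vary from mixed vote to mixed vote and cannot be neutralised by uniform padding. One therefore has to correct competitor by competitor, choose the dummy votes so that a correction made for one competitor does not disturb the comparison between $p$ and any other competitor or the ordering among the remaining candidates, and --- for $t$-CCDC-${\tt Borda}_{av}$ --- track the fact that deleting a candidate both shortens the votes containing it and decreases~$N$, so the bonus $\frac{N-|v|-1}{2}$ drifts during the control operations; since at most $\ell$ deletions occur this drift is $O(\ell)$-bounded and can be absorbed by the dummy pool, but the ``$\Longleftarrow$'' direction has to be re-verified with the drifted ${\tt Borda}_{av}$ scores in place. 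The analogous verification was already carried out for ${\tt Borda}_{\downarrow}$ in the proofs of Theorems~\ref{thm:top-CCDC-down-w} and~\ref{thm:top-CCAC-down-w}, so once the gadget is fixed the remaining work is routine.
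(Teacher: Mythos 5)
Your high-level instinct (make ${\tt Borda}_{av}$ behave like ${\tt Borda}_{\uparrow}$ by padding with dummies and then inherit hardness from Theorems~\ref{thm:top-CCAC-up-w} and~\ref{thm:top-CCDC-up-w}) is the right one, but the proposal has a genuine gap: the actual gadget is never constructed. Everything rests on "a small, carefully chosen set of dummy votes whose net ${\tt diff}_{av}$-contribution equals the required correction," chosen competitor by competitor, without disturbing any other comparison, plus a switch to $D$-regular instances and rescaled multiplicities --- none of which is specified or verified. Moreover, the concluding claim that after the repair "the per-operation effect on ${\tt diff}(p,c,V)$ is the same under both rules" is not true as stated: every candidate addition or deletion changes the total number $m$ of candidates (and possibly the lengths $|v|$), so under ${\tt Borda}_{av}$ the unranked bonus $\frac{m-|v|-1}{2}$ shifts in every vote, while under ${\tt Borda}_{\uparrow}$ only the ranked scores shift; you notice this drift for deletions and defer it, but it equally affects the addition case, so agreement of the initial score differences alone does not give the claimed equivalence of solutions and the "$\Longleftarrow$" directions are left open in both cases.

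The paper avoids all of this with one arithmetic observation that your proposal explicitly dismisses ("cannot be neutralised by uniform padding"): starting from the ${\tt Borda}_{\uparrow}$ instances of Theorems~\ref{thm:top-CCAC-up-w} and~\ref{thm:top-CCDC-up-w}, it adds exactly $|C^1|-t-1$ dummy candidates that are ranked in no vote (no dummy votes, no change to $V$ or $\ell$, no regularity assumption). With $N^2=2N^1-t-1$ candidates, a ranked candidate at position $i$ beats an unranked one by $N^2-i-\frac{N^2-t-1}{2}=N^1-i$ under ${\tt Borda}_{av}$, which is exactly the ${\tt Borda}_{\uparrow}$ margin, and both-ranked or both-unranked pairs contribute identically under the two rules anyway; hence ${\tt diff}_{av}(c,c',V^2)={\tt diff}_{\uparrow}(c,c',V^1)$ for all original candidates and the dummies can never win, so solutions transfer verbatim and W[2]-hardness with respect to $\ell$ (and NP-hardness) follows immediately. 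This single choice of the dummy count is the missing key idea; with it, the per-competitor corrections, dummy votes, and $D$-regularity in your plan are unnecessary.
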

\begin{proof}
We prove this theorem by reductions from $t$-CCAC-${\tt Borda}_{\uparrow}$ and $t$-CCDC-${\tt Borda}_{\uparrow}$ problems.
Each instance of $t$-CCAC-${\tt Borda}_{\uparrow}$ and $t$-CCDC-${\tt Borda}_{\uparrow}$ can be transformed into an instance of $t$-CCAC-${\tt Borda}_{av}$ and $t$-CCDC-${\tt Borda}_{av}$ in polynomial time respectively.
Given an instance $(E^1=(C^1,C^{1*},V^1),\ell^1)$ of $t$-CCAC-${\tt Borda}_{\uparrow}$ problem, we construct an instance $(E^2=(C^2,C^{2*},V^2),\ell^2)$ of $t$-CCAC-${\tt Borda}_{av}$.
We construct a set of dummy candidates $C'$ with $|C'|=|C^1|-t-1$.
Let $C^2=C^1\cup C'$, $C^{2*}=C^{1*}$, $V^2=V^1$, and $\ell^2=\ell^1$.
$E^1$ and $E^2$ are the same except for the candidates in $C'$.
The dummy candidates in $C'$ do not exist in any votes.
The aim of construction these dummy candidates is to control the scores of $C^1$ in $E^2$.
In the following, we prove that any two candidates $c$ and $c'$ in $C^1$ satisfying ${\tt diff}(c,c',V^1)={\tt diff}(c,c',V^2)$.

Let $N^1$ be the number of candidates in $C^1$, and $N^2$ be the number of candidates in $N^2$.
It holds $N^1=|C^1|$, $N^2=|C^1|+|C'|=2|C^1|-t-1$.
Under the rule ${\tt Borda}_{\uparrow}$, if ${\tt pos}(c,v)\neq 0$, the candidate $c$ receives $N^1-{\tt pos}(c,v)$ points from $v$, otherwise, $c$ receives $0$ points from $v$.
Let $V^1(c)$ be the set of votes where $c$ is a ranked candidate in a vote of $V^1$, that is, $V^1(c)=\{v\ |\ {\tt pos}(c,v)\neq 0, v\in V^1\}$.
The total score of $c$ is: ${\tt score}(c,V^1)=\sum_{v\in V^1(c)}(N^1-{\tt pos}(c,v))$.
Under the rule ${\tt Borda}_{av}$, if ${\tt pos}(c,v)\neq 0$, the candidate $c$ receives $N^2-{\tt pos}(c,v)$ points from $v$, otherwise, $c$ receives $\frac{N^2-t-1}{2}$ points from $v$.
Similarly, Let $V^2(c)$ be the set of votes where $c$ is a ranked candidate in a vote of $V^2$, that is, $V^2(c)=\{v\ |\ {\tt pos}(c,v)\neq 0, v\in V^2\}$.
The total score of $c$ is ${\tt score}(c,V^2)=\sum_{v\in V^2(c)}(N^2-{\tt pos}(c,v))+|V^2\setminus{V^2(c)}|\times \frac{N^2-t-1}{2}$.
Then, for any two candidates $c$ and $c'\in C^1$ of $E^1$, it holds:
\begin{equation*}
\begin{split}
   {\tt diff}(c,c',V^1)&=\sum_{v\in V^1(c)}(N^1-{\tt pos}(c,v))\\
   &-\sum_{v\in V^1(c')}(N^1-{\tt pos}(c',v))\\
   &=(|V^1(c)|-|V^1(c')|)\times N^1\\
   &-(\sum_{v\in V^1(c)}({\tt pos}(c,v))-\sum_{v\in V^1(c')}({\tt pos}(c',v))).
\end{split}
\end{equation*}
For $E^2$, the dummy candidate cannot be the unique winner.
We just calculate the score of candidates in $C^1$.
For any two candidates $c$ $c'\in C^1$ of $E^2$, it holds:
\begin{equation*}
\begin{split}
{\tt score}(c,V^2)&=\sum_{v\in V^2(c)}(N^2-{\tt pos}(c,v))\\
&+|V^2\setminus{V^2(c)}|\times \frac{N^2-t-1}{2},\\
{\tt score}(c',V^2)&=\sum_{v\in V^2(c')}(N^2-{\tt pos}(c',v))\\
&+|V^2\setminus{V^2(c')}|\times \frac{N^2-t-1}{2},\\
   {\tt diff}(c,c',V^2)&={\tt score}(c,V^2)-{\tt score}(c',V^2)\\
   &=(|V^1(c)|-|V^1(c')|)\times(N^2-\frac{N^2-t-1}{2})\\
   &-(\sum_{v\in V^1(c)}({\tt pos}(c,v))-\sum_{v\in V^1(c')}({\tt pos}(c',v)))\\
    &=(|V^1(c)|-|V^1(c')|)\times N^1\\
    &-(\sum_{v\in V^1(c)}({\tt pos}(c,v))-\sum_{v\in V^1(c')}({\tt pos}(c',v)))\\
    &={\tt diff}(c,c',V^1).
\end{split}
\end{equation*}
It means that the different scores of any two candidates $c,c'\in C^1$ in $E^1$ and $E^2$ are the same.
Since $C^{2*}=C^{1*}$, $V^2=V^1$, $\ell^2=\ell^1$, then $E^2$ has a solution if and only if $E^1$ has a solution.
According theorem~\ref{thm:top-CCAC-up-w}, $t$-CCAC-${\tt Borda}_{\uparrow}$ problem is NP-hard and W[2]-hard with respect to~$a$,
So, $t$-CCAC-${\tt Borda}_{av}$ problem is NP-hard and W[2]-hard with respect to~$\ell$.

Similarly, each instance $(E^3=(C^3,V^3),\ell^3)$ of $t$-CCDC-${\tt Borda}_{\uparrow}$ can be transformed to an instance $(E^4=(C^4,V^4),\ell^4)$ of $t$-CCDC-${\tt Borda}_{av}$ with $C^4=C^3\cup C'$, $V^4=V^3$, and $\ell^4=\ell^3$ where $C'$ contains $|C^3|-t-1$ dummy candidates.
In $E^3$ and $E^4$,  it holds ${\tt diff}(c,c',V^3)={\tt diff}(c,c',V^4)$ for any two candidates $c$ and $c'\in C^3$,.
$E^4$ has a solution if and only if $E^3$ has a solution.
According theorem~\ref{thm:top-CCDC-up-w}, $t$-CCDC-${\tt Borda}_{\uparrow}$ problem is NP-hard and W[2]-hard with respect to~$\ell$,
So, $t$-CCDC-${\tt Borda}_{av}$ problem is NP-hard and W[2]-hard with respect to~$\ell$.
The proof is done.
\end{proof}

\section{Conclusion}
In this paper, we complete the parameterized complexity analysis of all Borda control by adding/deleting votes/candidates problems.
In particular, we show W[2]-hardness of CCDV with respect to the number of deletions, solving an open question posed in~\cite{LZ-TCS-2013}.
Furthermore, we also achieve a complete picture of classical and parameterized complexity of these control problems with $t$-truncated votes.
It is interesting to observe that the classical complexity border lies between $t=2$ and $t=3$ for CCAV and CCDV, and between $t=1$ and $t=2$ for CCAC and CCDC.
Meanwhile, CCAV and CCDV are in FPT with respect to $t$ and the number $\ell$ of additions/deletions, while CCAC and CCDC are W[2]-hard with the number of additions/deletions as parameter even with $t=2$.
For the future work, it is interesting to examine whether a dichotomy result concerning parameterized complexity for scoring rule control problems in the sense of~\cite{ELH-NCAI-2014} can be achieved.
Hereby, the number of additions/deletions seems a reasonable parameter.
The basic idea behind the reduction in the proof of Theorem~\ref{Complete-CCDV-ell} might also work for other scoring rules.
Furthermore, Fitzsimmons and Hemaspaandra~\cite{FH-ICAD-2015} and Menon and Larson~\cite{ML-AAMAS-2017} consider a large collection of manipulation, control, bribery problems in relation to truncated votes.
Hereby, very few problems have been examined from the viewpoint of parameterized complexity.
Both algorithmic ideas and reduction gadgets used here could be applicable to these truncated votes problems.
Finally, we are not aware of any parameterized complexity result for Borda control by partitioning votes or candidates or applying other operations in~\cite{NR-AAAI-2017}.
\clearpage

%% The file named.bst is a bibliography style file for BibTeX 0.99c
% \bibliographystyle{splncs04}
% \bibliography{azzhou}

\end{document}